\renewcommand{\@thesubfigure}{\hskip\subfiglabelskip}
\newtheorem{proposition}{Proposition}
\renewcommand{\maketag@@@}[1]{\hbox{\m@th\normalsize\normalfont#1}}%
\begin{document}
\title{{Frequency Diverse Array-enabled RIS-aided  Integrated Sensing and Communication}}
\author{Hanyu Yang, Shiqi Gong, Heng Liu, Chengwen Xing, \textit{Member, IEEE}, Nan Zhao, \textit{Senior Member, IEEE}, and Dusit Niyato, \textit{Fellow, IEEE}
\thanks{H. Yang, H. Liu and C. Xing are with the School of Information and Electronics, Beijing Institute of Technology, Beijing, 100081, China (e-mail: hyyangbit@outlook.com; heng\_liu\_bit\_ee@163.com; xingchengwen@gmail.com).}
\thanks{S. Gong is with the School of Cyberspace Science
and Technology, Beijing Institute of Technology, Beijing, 100081, China (email: gsqyx@163.com).}
\thanks{N. Zhao is with the School of Information and Communication Engineering, Dalian University of Technology, Dalian, 116024, China (e-mail: zhaonan@dlut.edu.cn).}
\thanks{D. Niyato is with the College of Computing and Data Science, Nanyang Technological University, Singapore 639798, (e-mail: dniyato@ntu.edu.sg).}}\maketitle

\begin{abstract}
    Integrated sensing and communication (ISAC) has been envisioned as a prospective technology to enable ubiquitous sensing and communications in next-generation wireless networks. In contrast to existing works on reconfigurable intelligent surface (RIS) aided ISAC systems using conventional phased arrays (PAs), this paper investigates a frequency diverse array (FDA)-enabled RIS-aided ISAC system, where the FDA aims to provide a distance-angle-dependent beampattern to effectively suppress the clutter, and RIS is employed to establish high-quality links between the BS and users/target. We aim to maximize sum rate by jointly optimizing the BS transmit beamforming vectors, the covariance matrix of the dedicated radar signal, the RIS phase shift matrix, the FDA frequency offsets and the radar receive equalizer, while guaranteeing the required signal-to-clutter-plus-noise ratio (SCNR) of the radar echo signal. To tackle this challenging problem, we first theoretically prove that the dedicated radar signal is unnecessary for enhancing target sensing performance, based on which the original problem is much simplified. Then, we turn our attention to the single-user single-target (SUST) scenario to demonstrate that the FDA-RIS-aided ISAC system always achieves a higher SCNR than its PA-RIS-aided counterpart. Moreover, it is revealed that the SCNR increment exhibits linear growth with the BS transmit power and the number of BS receive antennas. In order to effectively solve this simplified problem, we leverage the fractional programming (FP) theory and subsequently develop an efficient alternating optimization (AO) algorithm based on symmetric alternating direction method of multipliers (SADMM) and successive convex approximation (SCA) techniques. Numerical results demonstrate the superior performance of our proposed algorithm in terms of sum rate and radar SCNR.
\end{abstract}

\begin{IEEEkeywords}
    Integrated sensing and communication (ISAC), frequency diverse array (FDA), reconfigurable intelligent surface (RIS), symmetric alternating direction method of multipliers (SADMM), successive convex approximation (SCA).
\end{IEEEkeywords}

\section{Introduction}

As one of the pioneering technologies for the upcoming sixth generation (6G) wireless systems, integrated sensing and communication (ISAC) has aroused extensive research interests in the recent past\cite{Intro_ref2}. ISAC technology enables wireless devices to perform high-speed data transmission and high-accuracy real-time sensing simultaneously through sharing the same resources, such as hardware platforms, spectrum, waveform and so on, thereby fulfilling stringent quality of service (QoS) requirements of numerous emerging 6G intelligent applications. In terms of the implementation of ISAC, the massive MIMO array with the capability of providing large spatial multiplexing/diversity gains and high-gain ultra-narrow beams has also been widely utilized to enhance both communication performance and spatial resolution for target detection\cite{Intro_ref2}. Unfortunately, considering that severe ISAC performance degradation may still be inevitable in harsh electromagnetic environments since the massive MIMO array can only passively adapt to the varying wireless environment, reconfigurable intelligent surface (RIS) has been proposed as a promising and cost-effective solution to actively reshape the wireless environment\cite{rismec}. Specifically, RIS consists of numerous low-cost reflecting elements, each of which can dynamically control the reflection amplitude and phase of the incident signal independently, thereby creating favorable propagation conditions and thus significantly improving ISAC performance\cite{Intro_ref4}.

The research works on RIS-aided ISAC systems have initially concentrated on a simple single-user multiple-input single-output (SU-MISO) scenario with a single target, where 
the signal-to-noise ratio (SNR) of the radar echo signal was optimized to maximize the target detection probability, while satisfying the minimum required SNR of the communication signal for the sake of guaranteeing  communication performance\cite{SUST1,SUST2}. However, this kind of scheme is inefficient in terms of wireless resources utilization including hardware platforms, spectrum and waveform. Moreover, although these works can provide fundamental insights on the effectiveness of RIS in enhancing ISAC performance, their design cannot be directly applied to general multi-user (MU) scenario in the presence of interference. Therefore, there have also been many studies focusing on exploring RIS-aided MU ISAC systems from both  communication-centric and radar-centric perspectives. For the   communication-centric design, optimization objectives such as system sum rate maximization, data mean square error (MSE) minimization and energy efficiency maximization have been widely investigated under various radar sensing constraints\cite{MUST1,MUST2,Intro_ref4,RIS_ISAC3}. Meanwhile, for the radar-centric design, the Cramer-Rao bound (CRB) and radar beampattern gain were often adopted as sensing performance evaluation metrics, while simultaneously regarding the communication QoS requirements as constraints\cite{MUST4,MUST_gain}. Considering that the target may be surrounded by clutters in realistic scenarios, the  signal-to-clutter-plus-noise ratio (SCNR) maximization was further studied in \cite{MUST5}, where a minimum level of signal-to-interference-plus-noise (SINR) for each user was also required. In order to further facilitate the deep integration of communication and sensing, the authors in \cite{ISAC_ref4} proposed  the weighted sum rate metric of both communication and sensing subsystems, and characterized the Pareto-optimal communication-sensing rate region to clearly illustrate the trade-off between these two functionalities. It is worth noting that the aforementioned studies all assume the conventional phased array (PA)-aided BS.  Since the conventional  PA can only generate the angle-dependent beampattern  with limited spatial degrees of freedom (DoFs),  the PA-aided ISAC system may  suffer from significant communication and sensing performance degradation, especially when the target and the clutters are in close spatial proximity.

Distinct from the PA of which an identical waveform is shared across all array elements, the frequency diverse array (FDA) assigns carrier frequency offset to each antenna to generate the distance-angle-dependent beampattern, thus attaining the additional range resolution as compared to the PA\cite{Intro_ref7}. Inspired by this appealing advantage, several works have investigated FDA-aided ISAC systems\cite{FDA_ISAC1,FDA_ISA2,FDA_ISAC3,FDA_ISAC4,FDA_ISAC5}. Initially, the spectrum coexistence of an FDA-based radar and a MIMO communication system was investigated in \cite{FDA_ISAC5}, where the FDA with uniform frequency offsets was only utilized for enhancing SCNR of the radar echo signal. For the dual-functional FDA-aided ISAC systems, the FDA can also be utilized for improving communication performance based on the additional DoFs induced by FDA frequency offsets\cite{FDA_ISAC1,FDA_ISAC4}. Specifically, in terms of the radar-centric design, the authors in \cite{FDA_ISAC1} utilized a set of orthogonal waveforms  to implement the primary radar detection function, and meanwhile the binary data bits were inserted into FDA frequency offsets for supporting the secondary communication function. From the communication-centric perspective, the authors in \cite{FDA_ISAC4} proposed a novel frequency offset permutation index modulation scheme to increase the communication data rate subject to the  required target estimation CRB. The studies in \cite{FDA_ISA2} further leveraged the FDA frequency offsets to create orthogonality between the radar and communication beams, and the communication signal was projected along the null space of the radar main beam to eliminate the cross-interference. The authors of \cite{FDA_ISAC3} considered a more general FDA-aided MU ISAC system, where the optimal beamforming design was investigated under both radar-centric and communication-centric schemes. It is worth noting that the above mentioned works are mostly limited to the single-user scenario\cite{FDA_ISAC1,FDA_ISA2,FDA_ISAC4}. Although the authors in \cite{FDA_ISAC3} have investigated the MU case, the optimization of frequency offsets is left unconsidered, and thus the potential of FDA in enhancing ISAC performance is far from fully exploited. Moreover, none of the above works considers introducing the RIS to assist communication and sensing, thus may suffer from severe performance deterioration in poor propagation conditions.

In order to effectively combat severe propagation loss and blockages, and meanwhile create more DoFs for both communication  and sensing functionalities, it is essential to integrate RIS and FDA into ISAC systems. To the best of our knowledge, there are few works investigating the joint deployment of FDA and RIS in ISAC systems.  An initial attempt for such FDA-RIS-aided wireless system implementation was presented in \cite{FDAandRIS}, where the FDA-based BS aimed at providing the distance-angle-dependent beampattern to prevent information leakage to the eavesdropper, and the RIS was deployed to establish high-quality links between the BS and legitimate users, thereby improving the system secrecy rate. Further than the work in \cite{FDAandRIS}, in this paper, we aim to maximize the system sum rate in the considered FDA-RIS-aided ISAC system by additionally optimizing the FDA frequency offsets, while guaranteeing the level of  SCNR of the radar echo signal. The main contributions are summarized as follows.
\begin{itemize}
    \item Firstly, we establish both communication and sensing channel models for the studied FDA-enabled RIS-aided (hereafter abbreviated as FDA-RIS-aided) ISAC system, which are usually more complicated than those of its PA-aided counterpart due to multiple transmit frequencies adopted by the FDA. Specifically, considering that the intrinsic time-variant beampattern of the FDA may cause severe communication and target detection performance degradation, we propose to apply the receive processing chain to effectively eliminate the time-variant components in propagation models. Accordingly, the equivalent time-invariant channel models can be established, based on which the corresponding optimization problem is formulated.
    \item   Secondly, to achieve an optimal balance between sensing and communication, we aim to maximize system sum rate while guaranteeing the required SCNR of the echo signal by jointly optimizing the BS transmit beamforming vectors, the covariance matrix of the dedicated radar signal, the RIS phase shift matrix, the FDA frequency offsets and the radar receive equalizer. We first theoretically prove that the dedicated radar signal is unnecessary for enhancing target sensing performance, based on which the original problem is much simplified. Then, we explore a simple single-user single-target (SUST) scenario to reveal that the FDA-aided BS always achieves a higher SCNR than its PA-aided counterpart, and demonstrate that the SCNR increment exhibits linear growth with the BS transmit power and the number of BS receive antennas.
    \item Finally, in order to effectively tackle this simplified problem, we firstly leverage the fractional programming (FP)-based technique to reformulate it into a tractable parametric subtraction form. We then propose an efficient alternating optimization (AO) algorithm jointly based on the symmetric alternating direction method of multiplier (SADMM) and successive convex approximation (SCA) methods. Under this AO framework, the optimal BS transmit beamforming vectors and receive equalizer can both be obtained in closed forms. Then, we apply the SADMM and SCA methods to find locally optimal solutions of the unit-modulus constrained RIS phase shifts and the FDA frequency offsets, respectively. Numerical results demonstrate the superior communication and sensing performance of our proposed algorithm.
\end{itemize}

\noindent {\textbf{Notations}}: Matrices and column vectors are written in boldface letters in upper-case and lower-case, respectively. ${\bf {I}}_{N}$ denotes the $N\!\times\!N$ identity matrix. For a complex number $x$, $\vert x\vert,x^*,\angle x$ denote its absolute value, conjugate and angle, respectively. For vector or matrix operation, $(\cdot)^{\rm T},(\cdot)^{\rm H}$ denote the transpose and conjugate transpose operations, $\Vert\bf x\Vert$ and $[{\bf x}]_m$ denote the $l_2$ norm and the $m$-th entry of vector $\bf x$. $\bf X^{\dagger}$, $\Vert{\bf X}\Vert$ and $[{\bf X}]_{m,n}$ denote the pseudo-inverse, the Frobenius norm and the $(m,n)$-th entry of matrix ${\bf X}$, respectively. The diagonal matrix composed of the elements of ${\bf x}$ and the vector composed of the diagonal elements of ${\bf X}$ are denoted by ${\rm diag}({\bf x})$ and ${\rm diag}({\bf X})$, respectively. ${\rm vec}({\bf X})$ represents stacking the elements of ${\bf X}$ into a vector. The symbol $\otimes$  and $\circ$ denote the Kronecker and Hadamard product operation, respectively. The abbreviation ``w.r.t.'' refers to the phrase ``with respect to''. $\mathcal{O}(\cdot)$ denotes the standard big-O notation.

\section{System Model and Problem Formulation}\label{S_sysandpro}

{\noindent\textit{A. FDA-RIS-aided  ISAC System Setup}}

As shown in Fig.~\ref{System_Model}, in this paper, we consider an FDA-RIS-aided ISAC system, where the BS  equipped with an FDA-aided ISAC transmitter and  a ULA-based radar receiver  intends to communicate with $K$ single-antenna users and to detect a target of interest surrounded by $C$ clutters
simultaneously with  the assistance of a RIS. The FDA-aided ISAC  transmitter and  the radar receiver are  equipped with $N_{\rm t}$ transmit  and   $N_{\rm r}$  receive antennas (denoted by the sets $\mathcal{N}_{\rm t}\!=\!\{1,\cdots,N_{\rm t}\}$ and $\mathcal{N}_{\rm r}\!=\!\{1,\cdots,N_{\rm r}\}$), respectively, and operate on the same hardware platform.  Different from  the traditional PA with only angle-dependent beampattern due to its identical waveform on each transmit antenna, the FDA deployed at the BS is able to provide the distance-angle-dependent beampattern by introducing a frequency offset on each transmit antenna. To be specific, the radiated signal frequency at the $n_{\rm t}$-th transmit antenna  is expressed  as\cite{Intro_ref7}
\begin{align}
    f_{ n_{\rm t}} = f_{\rm ref}+\Delta f_{ n_{\rm t}},~~\forall n_{\rm t}\in\mathcal{N}_{\rm t},
\end{align}
where $f_{\rm ref}$  and $\Delta f_{ n_{\rm t}}$ denote the reference carrier frequency and the frequency offset at the $n_{\rm t}$-th transmit antenna, respectively. Furthermore, the RIS consists of $M\!=\!M_{\rm azi}\!\times\!M_{\rm ele}$ reflecting elements (denoted by the set $\mathcal{M}\!\triangleq\!\{1,\cdots,M\}$), where $M_{\rm azi}$ and $M_{\rm ele}$ represent the numbers of azimuth and elevation elements, respectively. Since each RIS element is able to reflect the incident signal using a controllable phase shift, we define ${\bm\Theta}\!=\!{\rm diag}\{e^{j\theta_1},\cdots,e^{j\theta_{M}}\}\!
\in\!\mathbb{C}^{M\times M}$ as the diagonal  reflection coefficient matrix at the RIS, where $\theta_{m}\!\in\![0,2\pi]$ denotes the phase shift imposed on the incident signal at the $m$-th RIS element. It is  assumed  that  direct  links between the BS and all users/target/clutters are blocked by obstacles such as buildings and trees. 
 To circumvent this blockage, we consider deploying the RIS in an open area to create LoS channels between the RIS and the BS/target, which promotes the successful sensing\cite{BR_LOS1}. 

 \begin{figure}[t]
    \vspace{-16mm}
    \begin{minipage}[t]{0.5\linewidth}
        \centering
        \includegraphics[width=0.95\textwidth]{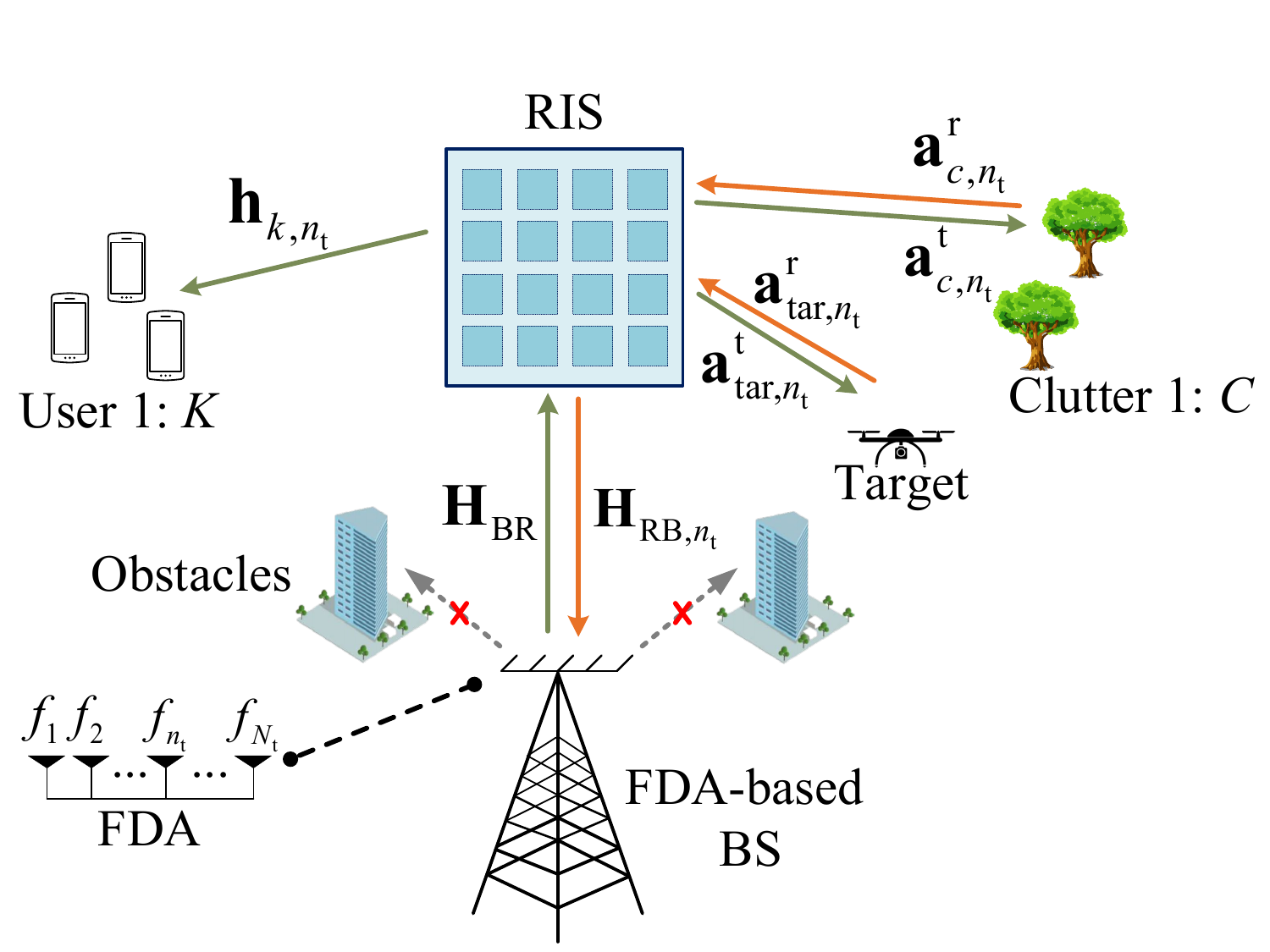}
        \vspace{-7mm}
        \caption{The considered FDA-RIS-aided ISAC system.}
        \label{System_Model}
    \end{minipage}%
    \begin{minipage}[t]{0.5\linewidth}\vspace{-52mm}
        \centering
        \includegraphics[width=1\textwidth]{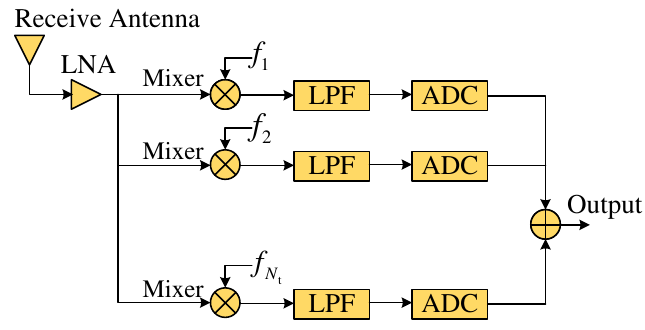}
        \vspace{-4.8mm}
        \caption{Receive processing chain for the FDA-RIS-aided ISAC system.}
        \label{Fig_LPF}
    \end{minipage}
    \vspace{-8mm}
\end{figure}

  For  the  sake  of performing communication and sensing simultaneously,  the FDA-aided BS intends to  send  $K$ desired information symbols $s_k,k\!\in\!\mathcal{K}\!\triangleq\!\{1,\cdots,K\}$ to CUs, together
 with a dedicated  radar signal ${\bf s}_{0}\!\in\!\mathbb{C}^{N_{\rm t}\times 1}$, which are all assumed to be complex Gaussian distributed, i.e., $s_k\!\sim\!\mathcal{CN}(0,1)$ and ${\bf s}_{0}\!\sim\!\mathcal{CN}({\bf 0},{\bf R}_{0})$ with ${\bf R}_{0}\!\in\!\mathbb{C}^{N_{\rm t}\times N_{\rm t}}$ being the positive semidefinite covariance  matrix,  and independent of  each other.  Let ${\bf w}_k\!\in\!\mathbb{C}^{N_{\rm t}\times 1}$ denote the transmit beamforming vector for the CU $k$,  the 
transmitted signal by the FDA-aided BS is then expressed as 
\begin{align}
    {\bf x}\!=\!\sum\limits_{k=1}^{K}{\bf w}_ks_k\!+\!{\bf s}_{0},
\end{align}
which is subject to a total transmit power constraint, i.e., $\mathbb{E}[\Vert{\bf x}\Vert^2]\!=\!\sum\limits_{k=1}^{K}\Vert{\bf w}_k\Vert^2\!+\!{\rm Tr}({\bf R}_0)\!\leq\!P_{\rm B}$ , with $P_{\rm B}$ being the maximum transmit power  budget at the BS. Nevertheless, it is known that  the transmit beampattern of FDA is  actually  time-variant, which may cause inevitable  communication and target detection   performance degradation \cite{FDA_LPF1}. Fortunately, the receive processing chain has been proposed as an effective remedy to alleviate this issue \cite{FDA_LPF1,FDA_LPF4}. As shown in Fig.~\ref{Fig_LPF},  the received signal is firstly amplified by a low-noise amplifier (LNA), and then separated into $N_{\rm t}$ streams. The $n_{\rm t}$-th stream is firstly mixed with a reference signal at a frequency $f_{n_{\rm t}}$, which is then filtered by a low-pass filter (LPF) to eliminate its time-variant component, and subsequently sampled by an analog-to-digital converter (ADC). Finally, the $N_{\rm t}$ sampled streams are summed together to generate the output.

{\noindent\textit{B. Communication Signal Model}}

In  this FDA-RIS-aided ISAC system, we assume narrowband  transmission over quasi-static flat fading channels.  Let  ${\bf H}_{\rm BR}\!\in\!\mathbb{C}^{M\times N_{\rm t}}$  denote the channel between the BS and the RIS.  Inspired by  the above-mentioned proper deployment of the RIS and the employment  of the receive processing chain,  we can model ${\bf H}_{\rm BR}$  as  the  following LoS  channel
\begin{align}\label{Chan-BR}
[{\bf H}_{\rm BR}]_{m,n_{\rm t}}\!=\!\sqrt{\beta_{\rm BR}}e^{-j{2\pi}\phi_{m,n_{\rm t}}^{\rm BR}},m\!\in\!\mathcal{M},n_{\rm t}\!\in\!\mathcal{N_{\rm t}},
\end{align}
where $\beta_{\rm BR}$ denotes the large-scale path loss of the BS-RIS link. $\phi_{m,n_{\rm t}}^{\rm BR}$ represents the  phase shift from the $n_{\rm t}$-th BS  antenna to the $m$-th RIS  element, which is computed as
\begin{align}\label{Chan-BR-app}
    \phi_{m,n_{\rm t}}^{\rm BR}\!&=\!\frac{f_{n_{\rm t}}}{\rm c}(D_{\rm BR}\!-\!(n_{\rm t}\!-\!1)d_{\rm B}\sin\psi^{\rm BR}\!+\!(m_{\rm ele}\!-\!1)d_{\rm ele}\sin\theta_{\rm azi}^{\rm BR}\sin\theta_{\rm ele}^{\rm BR}\notag\\
    &+\!(m_{\rm azi}\!-\!1)d_{\rm azi}\sin\theta_{\rm azi}^{\rm BR}\cos\theta_{\rm ele}^{\rm BR})\notag\\
    &\overset{(a)}{\approx}\!\frac{f_{n_{\rm t}}}{\rm c}D_{\rm BR}\!-\!\frac{f_{\rm ref}}{\rm c}((n_{\rm t}\!-\!1)d_{\rm B}\sin\psi^{\rm BR}\!-\!(m_{\rm ele}\!-\!1)d_{\rm ele}\sin\theta_{\rm azi}^{\rm BR}\sin\theta_{\rm ele}^{\rm BR}\!\notag\\
    &-\!(m_{\rm azi}\!-\!1)d_{\rm azi}\sin\theta_{\rm azi}^{\rm BR}\cos\theta_{\rm ele}^{\rm BR}),
\end{align}
where $m_{\rm ele}\!=\!{\rm mod}(m,M_{\rm ele})$ and $m_{\rm azi}\!=\!\lfloor m/M_{\rm ele}\rfloor$ denote the indices of the $m$-th RIS reflecting element along the elevation and azimuth directions, respectively. $D_{\rm BR}$ denotes the propagation distance of the BS-RIS link and ${\rm c}$ is the speed of light. $d_{\rm B}\!=\!\frac{\lambda_{\rm wav}}{2}\!=\!\frac{\rm c}{2f_{\rm ref}}$ denotes the half-wavelength inter-antenna spacing at the BS. $d_{\rm azi}(d_{\rm ele})\!=\!\frac{\lambda_{\rm wav}}{8}$  represents  the sub-wavelength inter-element spacing at the RIS along the azimuth (elevation) direction. $\psi^{\rm BR}$, $\theta_{\rm azi}^{\rm BR}$ and $\theta_{\rm ele}^{\rm BR}$ represent the angle-of-departure (AoD), the azimuth and elevation angles-of-arrival (AoAs) of the BS-RIS link, respectively. In particular, in \eqref{Chan-BR-app}, the approximation $(a)$ holds due to $\Delta f_{n_{\rm t}}\!\ll\!f_{\rm ref}$\cite{FDA_LPF1}.

Furthermore, owing to the  $N_{\rm t}$  transmit frequencies utilized at the BS and the receive processing chain adopted at the users,  we  denote ${\bf h}_{k,n_{\rm t}}\!\in\!\mathbb{C}^{M\times 1}$ as the channel between the RIS and user $k$ corresponding to the FDA frequency $f_{n _{\rm t}}$. Without loss of generality, the channel ${\bf h}_{k,n_{\rm t}}$ is assumed to be Rician fading, thus yielding 
\begin{align}
    {\bf h}_{k,n_{\rm t}}\!=\!\sqrt{\frac{\kappa\beta_{{\rm RU},k}}{\kappa+1}}{\bf h}_{k,n_{\rm t}}^{\rm LoS}\!+\!\sqrt{\frac{\beta_{{\rm RU},k}}{\kappa+1}}{\bf h}_{k,n_{\rm t}}^{\rm NLoS},
\end{align}
where $\kappa$ and $\beta_{{\rm RU},k}$ denote the Rician factor and the large-scale path loss  of the link between the RIS and user $k$, respectively. ${\bf h}_{k,n_{\rm t}}^{\rm LoS}$ and ${\bf h}_{k,n_{\rm t}}^{\rm NLoS}\!\sim\!\mathcal{CN}({\bf 0},{\bf I}_{M})$ represent the LoS and NLoS components, respectively. Similar to \eqref{Chan-BR}, the LoS component ${\bf h}_{k,n_{\rm t}}^{\rm LoS}$ is written as
\begin{align}\label{Chan-RU}
    \big[{\bf h}_{k,n_{\rm t}}^{\rm LoS}]_{m}\!=\!e^{-j{2\pi}\phi_{k,n_{\rm t},m}^{{\rm RU}}}, m\!\in\!\mathcal{M},
\end{align}
where the phase shift $\phi_{k,n_{\rm t},m}^{{\rm RU}}$ is defined similarly to \eqref{Chan-BR-app} and thus given by
\begin{align}
    \phi_{k,n_{\rm t},m}^{{\rm RU}}\!=\!-\frac{f_{n_{\rm t}}}{\rm c}D_{{\rm RU},k}\!+\!\frac{f_{\rm ref}}{\rm c}((m_{\rm ele}\!-\!1)d_{\rm ele}\sin\theta_{\rm azi}^{{\rm RU},k}\sin\theta_{\rm ele}^{{\rm RU},k}\!+\!(m_{\rm azi}\!-\!1)d_{\rm azi}\sin\theta_{\rm azi}^{{{\rm RU},k}}\cos\theta_{\rm ele}^{{\rm RU},k}),
\end{align}
where $D_{{\rm RU},k}$ denotes the propagation distance between the RIS and user $k$. $\theta^{{\rm RU},k}_{{\rm azi}}$ and $\theta^{{\rm RU},k}_{{\rm ele}}$ represent the azimuth and elevation AoDs of the link between the RIS and user $k$, respectively.

Based on \eqref{Chan-BR} and \eqref{Chan-RU}, the received signal at the user $k$ can be expressed as 
\begin{align}\label{comm_sig}
    y_k\!=\!\sum\limits_{n_{\rm t}=1}^{N_{\rm t}}\!{\bf h}_{k,n_{\rm t}}^{\rm H}{\bm\Theta}{\bf h}_{{\rm BR},n_{\rm t}}[{\bf x}]_{n_{\rm t}}\!+\!n_k\!=\!\tilde{\bf h}_k^{\rm H}{\bf x}\!+\!n_{k},
\end{align}
where ${\bf h}_{{\rm BR},n_{\rm t}}$ denotes the $n_{\rm t}$-th column of ${\bf H}_{\rm BR}$ corresponding to the $n_{\rm t}$-th FDA transmit antenna, $\tilde{\bf h}_k\!=\![{\bf h}_{k,1}^{\rm H}{\bm\Theta}{\bf h}_{{\rm BR},1},\cdots,{\bf h}_{k,N_{\rm t}}^{\rm H}{\bm\Theta}{\bf h}_{{\rm BR},N_{\rm t}}]^{\rm H}\!\in\!\mathbb{C}^{N_{\rm t}\times 1}$ denotes the  BS-RIS-User cascaded channel for the user $k$, and $n_k\!\sim\!\mathcal{CN}(0,\sigma_k^2)$ denotes the additive Gaussian noise at the user $k$.

{\noindent\textit{C. Radar Signal Model}}

As for the radar subsystem, we assume that the BS is able to detect the 
 target of interest located at $(\theta^{{\rm RT}}_{\rm azi},\theta^{{\rm RT}}_{\rm ele})$ within a distance of $D_{\rm RT}$ from the RIS in the presence of $C$ clutters, denoted by the set $\mathcal{C}\!\triangleq\!\{1,\cdots,C\}$, where the clutter $c$ is located at $(\theta^{{\rm RC},c}_{\rm azi},\theta^{{\rm RC},c}_{\rm ele})$ within a distance $D_{{\rm RC},c}$ from the RIS.  In  contrast  to the communication subsystem, where the existence of clutters  is beneficial  to  enhance spatial diversity, in this radar subsystem, the echo signals from the clutters are detrimental to target sensing and need to be mitigated. Owing to $N_{\rm t}$ transmit frequencies utilized at the BS, we define ${\bf a}_{{\rm tar},n_{\rm t}}^{\rm t}\!\in\!\mathbb{C}^{M\times 1}$ and ${\bf a}_{{\rm tar},n_{\rm t}}^{\rm r}\!\in\!\mathbb{C}^{M\times 1}$ as the transmit steering vector of the RIS-target link and the receive steering vector of the target-RIS link corresponding to the FDA frequency $f_{n_{\rm t}}$, respectively, which are modelled as
 \begin{align}\label{chan_RT}
    [{\bf a}_{{\rm tar},n_{\rm t}}^{\rm t}]_m\!=\!e^{-j2\pi\phi_{{\rm tar},n_{\rm t},m}^{\rm t}},~[{\bf a}_{{\rm tar},n_{\rm t}}^{\rm r}]_m\!=\!e^{-j2\pi\phi_{{\rm tar},n_{\rm t},m}^{\rm r}},~m\!\in\!\mathcal{M},
 \end{align}
 where the phase shifts $\phi_{{\rm tar},n_{\rm t},m}^{\rm t}$ and $\phi_{{\rm tar},n_{\rm t},m}^{\rm r}$ are given by
 \begin{align}
    \phi_{{\rm tar},n_{\rm t},m}^{\rm t}\!&=\!-\frac{f_{n_{\rm t}}}{\rm c}D_{{\rm RT}}\!+\!\frac{f_{\rm ref}}{\rm c}((m_{\rm ele}\!-\!1)d_{\rm ele}\sin\theta_{\rm azi}^{{\rm RT}}\sin\theta_{\rm ele}^{{\rm RT}}\!+\!(m_{\rm azi}\!-\!1)d_{\rm azi}\sin\theta_{\rm azi}^{{{\rm RT}}}\cos\theta_{\rm ele}^{{\rm RT}}),\\
    \phi_{{\rm tar},n_{\rm t},m}^{\rm r}\!&=\!\frac{f_{n_{\rm t}}}{\rm c}D_{{\rm RT}}\!+\!\frac{f_{\rm ref}}{\rm c}((m_{\rm ele}\!-\!1)d_{\rm ele}\sin\theta_{\rm azi}^{{\rm RT}}\sin\theta_{\rm ele}^{{\rm RT}}\!+\!(m_{\rm azi}\!-\!1)d_{\rm azi}\sin\theta_{\rm azi}^{{{\rm RT}}}\cos\theta_{\rm ele}^{{\rm RT}}).
 \end{align}
 In addition, ${\bf a}_{{c},n_{\rm t}}^{\rm t}$ $({\bf a}_{{c},n_{\rm t}}^{\rm r})$ denotes the transmit (receive) steering vector corresponding to the clutter $c$ and is defined similarly to ${\bf a}_{{\rm tar},n_{\rm t}}^{\rm t}$ (${\bf a}_{{\rm tar},n_{\rm t}}^{\rm r}$).

Since the FDA-based BS adopts $N_{\rm t}$ transmit frequencies and the radar receiver  employs a ULA  with the receive processing  chain shown in Fig.~\ref{Fig_LPF}, the effective RIS-BS  channel associated with the FDA frequency $f_{n_{\rm t}}$ can be expressed as
\begin{align}\label{chan-RB}
    [{\bf H}_{{\rm RB},n_{\rm t}}]_{n_{\rm r},m}\!=\!\sqrt{\beta_{\rm BR}}e^{-j{2\pi}\phi_{n_{\rm t},n_{\rm r},m}^{\rm RB}},m\!\in\!\mathcal{M},n_{\rm r}\!\in\!\mathcal{N_{\rm r}},
\end{align}
where the phase shift $\phi_{n_{\rm t},n_{\rm r},m}^{\rm RB}$ is given by
\begin{align}
    \phi_{n_{\rm t},n_{\rm r},m}^{\rm RB}\!&=\!\frac{f_{n_{\rm t}}}{\rm c}D_{\rm BR}\!+\!\frac{f_{\rm ref}}{\rm c}((n_{\rm r}\!-\!1)d_{\rm B}\sin\psi^{\rm BR}\!-\!(m_{\rm ele}\!-\!1)d_{\rm ele}\sin\theta_{\rm azi}^{\rm BR}\sin\theta_{\rm ele}^{\rm BR}\!\notag\\
    &-\!(m_{\rm azi}\!-\!1)d_{\rm azi}\sin\theta_{\rm azi}^{\rm BR}\cos\theta_{\rm ele}^{\rm BR}).
\end{align}

In this  radar subsystem,  both  communication   and  radar  signals  can be  utilized for detecting the  target of  interest,  both of which undergo two types of  RIS-reflected echo links, i.e., the BS-RIS-target-RIS-BS cascaded echo link  and the BS-RIS-clutter $c$-RIS-BS cascaded echo link.  Consequently, the received  echo signal at the BS is written as
\begin{align}\label{radar_sig}
    {\bf y}_{\rm R}\!&=\!\xi_{\rm T}\sum\limits_{n_{\rm t}=1}^{N_{\rm t}}\!{\bf H}_{{\rm RB},n_{\rm t}}{\bm \Theta}{\bf a}_{{\rm tar},n_{\rm t}}^{\rm r}({\bf a}_{{\rm tar},n_{\rm t}}^{\rm t})^{\rm H}{\bm\Theta}{\bf h}_{{\rm BR},n_{\rm t}}[{\bf x}]_{n_{\rm t}}\notag\\
    &+\!\sum\limits_{c=1}^{ C}\sum\limits_{n_{\rm t}=1}^{N_{\rm t}}\!\xi_c{\bf H}_{{\rm RB},n_{\rm t}}{\bm \Theta}{\bf a}_{c,n_{\rm t}}^{\rm r}({\bf a}_{c,n_{\rm t}}^{\rm t})^{\rm H}{\bm\Theta}{\bf h}_{{\rm BR},n_{\rm t}}[{\bf x}]_{n_{\rm t}}\!+{\bf n}_{\rm R},\notag\\
    &=\!\xi_{\rm T}{\bf H}_{\rm BT}{\bf x}\!+\!\sum\limits_{c=1}^{ C}\xi_c{\bf H}_{{\rm BC},c}{\bf x}\!+\!{\bf n}_{\rm R}
\end{align}
where ${\bf H}_{\rm BT}\!=\![{\bf H}_{{\rm RB},1}{\bm \Theta}{\bf a}_{{\rm tar},1}^{\rm r}({\bf a}_{{\rm tar},1}^{\rm t})^{\rm H}{\bm\Theta}{\bf h}_{{\rm BR},1},\cdots,{\bf H}_{{\rm RB},N_{\rm t}}{\bm \Theta}{\bf a}_{{\rm tar},N_{\rm t}}^{\rm r}({\bf a}_{{\rm tar},N_{\rm t}}^{\rm t})^{\rm H}{\bm\Theta}{\bf h}_{{\rm BR},N_{\rm t}}]\!\in\!\mathbb{C}^{N_{\rm r}\times N_{\rm t}}$ and ${\bf H}_{{\rm BC},c}\!=\![{\bf H}_{{\rm RB},1}{\bm \Theta}{\bf a}_{{c},1}^{\rm r}({\bf a}_{{c},1}^{\rm t})^{\rm H}{\bm\Theta}{\bf h}_{{\rm BR},1},\cdots,{\bf H}_{{\rm RB},N_{\rm t}}{\bm \Theta}{\bf a}_{{c},N_{\rm t}}^{\rm r}({\bf a}_{{c},N_{\rm t}}^{\rm t})^{\rm H}{\bm\Theta}{\bf h}_{{\rm BR},N_{\rm t}}]\!\in\!\mathbb{C}^{N_{\rm r}\times N_{\rm t}}$ represent the BS-RIS-target-RIS-BS and BS-RIS-clutter $c$-RIS-BS echo links, respectively. $\xi_{\rm T}\!\sim\!\mathcal{CN}(0,\beta_{\rm T})$ and $\xi_{c}\!\sim\!\mathcal{CN}(0,\beta_{c})$ denote the overall random complex gains associated with the target and clutter $c$, respectively, incorporating both the two-way channel large-scale path loss and the fluctuating radar cross section (RCS) coefficient. In addition, ${\bf n}_{\rm R}\!\sim\!\mathcal{CN}({\bf 0},\sigma_{\rm R}^2{\bf I}_{N_{\rm r}})$ denotes the additive Gaussian noise at the BS receive ULA.

{\noindent\textit{D. Performance Metrics and Problem Formulation}}

{\indent{\textit{1) Communication Performance Metric:}}} 
For the communication subsystem, we adopt the achievable
 sum rate to measure the  system  transmission performance.  To be specific, it follows from \eqref{comm_sig} that the data rate of user $k$ can be  written as 
\begin{align}
    R_k\!=\!\log_2\Big(1+\frac{\vert\tilde{\bf h}_k^{\rm H}{\bf w}_k\vert^2}{\sum\limits_{{k'=1,k'\neq k}}^{K}\vert\tilde{\bf h}_k^{\rm H}{\bf w}_{k'}\vert^2\!+\tilde{\bf h}_k^{\rm H}{\bf R}_0\tilde{\bf h}_k\!+\!\sigma_k^2}\Big).
\end{align}
Then, the achievable sum rate of the  considered  FDA-RIS-aided  ISAC system  is given by
\begin{align}\label{Rsum}
    R_{\rm sum}\!=\!\sum\limits_{k=1}^{K} R_k.
\end{align}

{\indent{\textit{2) Radar Performance Metric:}}} 
For the radar subsystem, we adopt the SCNR of the echo signal to assess the detectability of a target in the presence of clutters, considering  that the target detection probability  generally monotonically increases with SCNR.  
Based on \eqref{radar_sig}, the SCNR of the echo signal is expressed as\cite{SCNR_ref}
\begin{align}\label{SCNR_formula}
    {\Gamma}\!&=\!\frac{\mathbb{E}[\vert\xi_{\rm T}{\bf u}^{\rm H}{\bf H}_{\rm BT}{\bf x}\vert^2]}{\mathbb{E}[\,\vert\!\sum\nolimits_{c=1}^{C}\xi_{\rm c}{\bf u}^{\rm H}{\bf H}_{{\rm BC},c}{\bf x}\vert^2\!+\!\vert{\bf u}^{\rm H}{\bf n}_{\rm R}\vert^2]}\notag\\
    &=\!\frac{\beta_{\rm T}\big(\sum\limits_{k=1}^{K}\vert{\bf u}^{\rm H}{\bf H}_{\rm BT}{\bf w}_k\vert^2\!+\!{\bf u}^{\rm H}{\bf H}_{\rm BT}{\bf R}_0{\bf H}_{\rm BT}^{\rm H}{\bf u}\big)}{\sum\limits_{c=1}^{C}\!\beta_c\big(\sum\limits_{k=1}^{K}\vert{\bf u}^{\rm H}{\bf H}_{{\rm BC},c}{\bf w}_k\vert^2\!+\!{\bf u}^{\rm H}{\bf H}_{{\rm BC},c}{\bf R}_0{\bf H}_{{\rm BC},c}^{\rm H}{\bf u}\big)\!+\!\Vert{\bf u}\Vert^2\sigma_{\rm R}^2}, 
\end{align}
where ${\bf u}\!\in\!\mathbb{C}^{N_{\rm r}\times1}$ denotes the radar receive equalizer.

{\indent{\textit{3) Problem Formulation:}}} 
In  the  FDA-RIS-aided ISAC system, we aim to maximize the achievable sum rate in \eqref{Rsum} while guaranteeing the level of SCNR of the echo signal by jointly optimizing the BS transmit beamforming vectors ${\bf w}_k$, the covariance matrix of the dedicated radar signal ${\bf R}_0$, the RIS phase shift matrix ${\bm \Theta}$, the FDA frequency offsets $\Delta f_{n_{\rm t}}$ and the radar receive equalizer ${\bf u}$. Mathematically, the corresponding optimization problem is formulated as
\begin{subequations}\label{P1}
    \begin{align}
        ({\text{P1}}):~  \underset{{{\mathcal{V}_{\rm P_1}}}}
        {{\max}}~~&R_{\rm sum}\label{P1obj} \\
        {\rm {s.t.}}~~&\sum\limits_{k=1}^{K}\Vert {\bf w}_{ k}\Vert^2\!+\!{\rm Tr}({\bf R}_0)\leq P_{\rm B},\label{P1cons1} \\
        &{\Gamma}\geq \gamma_{\rm T},\label{P1cons2}\\
        &\vert[{\bm \Theta}]_{m,m}\vert=1,\forall m\in\mathcal{M}\!\triangleq\!\{1,\cdots,M\},\label{P1cons3}\\
        &\Delta f_{ n_{\rm t}}\in[0,f_{\max}],\forall n_{\rm t}\in\mathcal{N}_{\rm t},\label{P1cons4}
    \end{align}
\end{subequations}
where ${\mathcal{V}_{\rm P_1}}\!\!=\!\!\{{\bf w}_{ k},{\bf R}_0\!\succeq\!{\bf 0},{{\bm \Theta}},\Delta f_{ n_{\rm t}},\!{\bf u}\}$.  \eqref{P1cons1} denotes the BS transmit power constraint.  \eqref{P1cons2} guarantees the SCNR of the echo signal above the threshold $\gamma_{\rm T}$. \eqref{P1cons3} represents the unit-modulus constraint imposed on RIS phase shifts.  \eqref{P1cons4} models the allowable frequency offset interval at each FDA antenna. Unfortunately, problem (P1) is non-convex and challenging to solve due to the strongly coupled variables in $\mathcal{V}_{\rm P_1}$, the complicated log-fractional form of the objective function \eqref{P1obj} and the non-convex constraints \eqref{P1cons2} and \eqref{P1cons3}. To make this problem tractable, we first simplify the problem (P1) in the following section.

\section{Equivalent Simplification and Performance Discussion}\label{FDAPA}

In this section, by exploring the characteristics of the sum rate maximization problem (P1), we first theoretically prove that the dedicated radar signal is unnecessary for  enhancing  target sensing performance. Motivated by this fact, the original optimization problem (P1) is substantially simplified. To obtain insights into the considered FDA-RIS-aided ISAC system, we then investigate a simplified single-user single-target (SUST) scenario, where we analytically reveal the advantages of FDA in enhancing SCNR as compared to conventional PA.

{\noindent\textit{A. Equivalent Simplification}}

It follows from problem (P1) that the transmit beamforming vectors ${\bf w}_k,\forall k\!\in\!\mathcal{K}$ and the covariance matrix of the dedicated radar signal ${\bf R}_0$ are coupled in both the objective function and constraints, which makes it challenging to address. Fortunately, by analyzing its inherent characteristics, we have the following proposition.
\begin{proposition}\label{prop-sim}
    There must exist an optimal solution ${\mathcal{V}_{\rm P_1}^{\rm opt}}\!\!=\!\!\{{\bf w}_{ k}^{\rm opt}, {\bf R}_{0}^{\rm opt}, {{\bm \Theta}}^{\rm opt},\Delta f_{ n_{\rm t}}^{\rm opt},{\bf u}^{\rm opt}\}$ with ${\bf R}_{0}^{\rm opt}\!=\!{\bf 0}$ to problem (P1).
\end{proposition}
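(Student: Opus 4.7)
My plan is to take any optimal tuple $\mathcal{V}_{\mathrm{P}_1}^\mathrm{opt}=\{\mathbf{w}_k^\mathrm{opt},\mathbf{R}_0^\mathrm{opt},\mathbf{\Theta}^\mathrm{opt},\Delta f_{n_\mathrm{t}}^\mathrm{opt},\mathbf{u}^\mathrm{opt}\}$ and build from it another feasible tuple in which $\mathbf{R}_0=\mathbf{0}$ and whose achieved sum rate is no smaller. The structural observation that drives everything is that the SCNR \eqref{SCNR_formula} and the transmit-power constraint \eqref{P1cons1} depend on the beamforming variables only through the aggregate transmit covariance $\mathbf{W}_\mathrm{tot}\triangleq\sum_{k=1}^{K}\mathbf{w}_k\mathbf{w}_k^H+\mathbf{R}_0$. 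Indeed, the identity $\sum_{k}|\mathbf{u}^H\mathbf{H}\mathbf{w}_k|^2+\mathbf{u}^H\mathbf{H}\mathbf{R}_0\mathbf{H}^H\mathbf{u}=\mathbf{u}^H\mathbf{H}\mathbf{W}_\mathrm{tot}\mathbf{H}^H\mathbf{u}$, applied with $\mathbf{H}\in\{\mathbf{H}_\mathrm{BT},\mathbf{H}_{\mathrm{BC},c}\}$, collapses both the numerator and the denominator of $\Gamma$ into quadratic forms in $\mathbf{u}$ parameterized solely by $\mathbf{W}_\mathrm{tot}$, while the power constraint reduces to $\mathrm{Tr}(\mathbf{W}_\mathrm{tot})\leq P_\mathrm{B}$.

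Next, I would recast each per-user rate in the equivalent form
\begin{align*}
R_k=\log_2\!\Big(1+\frac{|\tilde{\mathbf{h}}_k^H\mathbf{w}_k|^2}{\tilde{\mathbf{h}}_k^H\mathbf{W}_\mathrm{tot}\tilde{\mathbf{h}}_k-|\tilde{\mathbf{h}}_k^H\mathbf{w}_k|^2+\sigma_k^2}\Big),
\end{align*}
which, for fixed $\mathbf{W}_\mathrm{tot}$, is strictly increasing in the individual signal power $|\tilde{\mathbf{h}}_k^H\mathbf{w}_k|^2$. Combining the two observations, the proof reduces to exhibiting a set of beamformers $\{\tilde{\mathbf{w}}_k\}$ that simultaneously satisfy (i) $\sum_{k}\tilde{\mathbf{w}}_k\tilde{\mathbf{w}}_k^H=\mathbf{W}_\mathrm{tot}^\mathrm{opt}$, which leaves $\Gamma$ and the power budget at their optimal values once $\tilde{\mathbf{R}}_0=\mathbf{0}$ is imposed, and (ii) $|\tilde{\mathbf{h}}_k^H\tilde{\mathbf{w}}_k|^2\geq|\tilde{\mathbf{h}}_k^H\mathbf{w}_k^\mathrm{opt}|^2$ for every $k\in\mathcal{K}$, which forces $R_\mathrm{sum}\geq R_\mathrm{sum}^\mathrm{opt}$ and, by optimality, equality.

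For the construction itself, I would factor $\mathbf{R}_0^\mathrm{opt}=\mathbf{B}\mathbf{B}^H$ and write $\tilde{\mathbf{w}}_k=\mathbf{w}_k^\mathrm{opt}+\mathbf{B}\mathbf{q}_k$; expanding $\sum_k\tilde{\mathbf{w}}_k\tilde{\mathbf{w}}_k^H$ shows that (i) is equivalent to the isometric identity $\sum_k\mathbf{q}_k\mathbf{q}_k^H=\mathbf{I}$ together with the vanishing of the Hermitian part of $\mathbf{B}\sum_k\mathbf{q}_k(\mathbf{w}_k^\mathrm{opt})^H$, after which the residual unitary degrees of freedom in $\{\mathbf{q}_k\}$ can be used to co-phase $\tilde{\mathbf{h}}_k^H\mathbf{B}\mathbf{q}_k$ with $\tilde{\mathbf{h}}_k^H\mathbf{w}_k^\mathrm{opt}$ so that (ii) holds term by term. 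The main obstacle is the joint feasibility of these conditions: the isometric identity demands $K\geq\mathrm{rank}(\mathbf{R}_0^\mathrm{opt})$, the cross-term cancellation couples all $K$ vectors, and the per-user alignment is driven by the effective cascaded channels $\{\tilde{\mathbf{h}}_k\}$. I expect the cleanest workaround to be an inductive absorption that treats $\mathbf{R}_0^\mathrm{opt}=\sum_i\lambda_i\mathbf{v}_i\mathbf{v}_i^H$ one eigencomponent at a time, each time transferring $\lambda_i\mathbf{v}_i\mathbf{v}_i^H$ out of the dedicated-radar covariance and into an appropriately phase-aligned rank-one update of a single communication beamformer; the preservation of $\mathbf{W}_\mathrm{tot}^\mathrm{opt}$ at every step keeps \eqref{P1cons1}-\eqref{P1cons2} intact, and the monotonicity from the previous paragraph ensures that the sum rate weakly increases after each absorption, so the induction terminates at a feasible solution with $\tilde{\mathbf{R}}_0=\mathbf{0}$ whose objective value matches $R_\mathrm{sum}^\mathrm{opt}$.
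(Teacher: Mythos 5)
Your opening observations are correct and coincide with the paper's starting point: the transmit-power constraint and the SCNR depend on the beamformers only through the aggregate covariance, and for a fixed aggregate covariance each $R_k$ is increasing in $|\tilde{\mathbf h}_k^{\rm H}\mathbf w_k|^2$. The gap is in the construction meant to realize your conditions (i) and (ii). Requiring the exact matrix identity $\sum_k\tilde{\mathbf w}_k\tilde{\mathbf w}_k^{\rm H}=\mathbf W_{\rm tot}^{\rm opt}$ is over-constrained and in general infeasible: the left-hand side has rank at most $K$, whereas $\mathbf W_{\rm tot}^{\rm opt}=\sum_k\mathbf w_k^{\rm opt}(\mathbf w_k^{\rm opt})^{\rm H}+\mathbf R_0^{\rm opt}$ can have rank up to $\min(N_{\rm t},\,K+{\rm rank}(\mathbf R_0^{\rm opt}))$, and nothing in (P1) forces ${\rm rank}(\mathbf R_0^{\rm opt})\leq K$, let alone the stronger condition needed for a $K$-term rank-one decomposition to exist. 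The proposed constructions do not close this. In the ansatz $\tilde{\mathbf w}_k=\mathbf w_k^{\rm opt}+\mathbf B\mathbf q_k$, the cross term $\sum_k(\mathbf w_k^{\rm opt}\mathbf q_k^{\rm H}\mathbf B^{\rm H}+\mathbf B\mathbf q_k(\mathbf w_k^{\rm opt})^{\rm H})$ is a full $N_{\rm t}\times N_{\rm t}$ Hermitian matrix equation that the degrees of freedom left over after imposing $\sum_k\mathbf q_k\mathbf q_k^{\rm H}=\mathbf I$ cannot generically annihilate, and the ``residual unitary freedom'' you invoke for co-phasing is already spent on those two conditions. The eigencomponent-at-a-time absorption fails at the first step for the same reason: replacing $\mathbf w_k$ by $\mathbf w_k+\sqrt{\lambda_i}e^{j\phi}\mathbf v_i$ perturbs the aggregate covariance by $\lambda_i\mathbf v_i\mathbf v_i^{\rm H}$ \emph{plus} the rank-two term $\sqrt{\lambda_i}(e^{-j\phi}\mathbf w_k\mathbf v_i^{\rm H}+e^{j\phi}\mathbf v_i\mathbf w_k^{\rm H})$, which no choice of the single phase $\phi$ cancels unless $\mathbf v_i$ is parallel to $\mathbf w_k$; hence feasibility of \eqref{P1cons2} and the interference levels are not preserved step by step.

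The paper's proof avoids exactly this obstruction by never asking for the full covariance to be preserved. It only matches the $K+2$ scalar quantities that actually enter (P1) --- the $K$ per-user SINRs (as inequalities), the transmit power, and the SCNR --- through the feasibility SDP (P1-SDR) in matrix variables $\check{\mathbf W}_k\succeq\mathbf 0$, which is trivially feasible by folding $\mathbf R_0^{\rm opt}$ into one user's covariance (this even strictly raises that user's SINR). The separable-SDP rank-reduction result of \cite{rank-SDR} then yields a feasible point with $\sum_k({\rm Rank}(\check{\mathbf W}_k))^2\leq K+2$, which together with $\check{\mathbf W}_k\neq\mathbf 0$ forces every $\check{\mathbf W}_k$ to be rank one, and the extracted beamformers give an equally good (or better) solution with $\mathbf R_0=\mathbf 0$. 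If you weaken your condition (i) from equality of matrices to equality of these $K+2$ scalar functionals of $\mathbf W_{\rm tot}$, you are led to the same rank-reduction machinery; as written, your argument does not go through.
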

\begin{proof}
    Please refer to Appendix~\ref{app-prop-sim}.
\end{proof}

Based on Proposition~\ref{prop-sim}, problem (P1) can be equivalently simplified as 
\begin{subequations}\label{P1-1}
    \begin{align}
        ({\text{P1-1}}):~  \underset{{{\mathcal{V}_{\rm P_{1-1}}}}}
        {{\max}}~&\hat{R}_{\rm sum}\!=\!\sum\nolimits_{k=1}^{K}\!\log_2\Big(1\!+\!\frac{\vert\tilde{\bf h}_k^{\rm H}{\bf w}_k\vert^2}{\sum\nolimits_{k'\neq k}\vert\tilde{\bf h}_k^{\rm H}{\bf w}_{k'}\vert^2\!+\!\sigma_k^2}\Big)\label{P1-1obj} \\
        {\rm {s.t.}}~~&\sum\nolimits_{k=1}^{K}\Vert {\bf w}_{ k}\Vert^2\leq P_{\rm B},\label{P1-1cons1} \\
        &\frac{\beta_{\rm T}\sum\nolimits_{k=1}^{K}\vert{\bf u}^{\rm H}{\bf H}_{\rm BT}{\bf w}_k\vert^2\!}{\sum\nolimits_{c=1}^{C}\!\sum\nolimits_{k=1}^{K}\beta_c\vert{\bf u}^{\rm H}{\bf H}_{{\rm BC},c}{\bf w}_k\vert^2\!+\!\Vert{\bf u}\Vert^2\sigma_{\rm R}^2}\geq \gamma_{\rm T},\label{P1-1cons2}\\
        &\eqref{P1cons3},\eqref{P1cons4}\label{P1-1cons3},
    \end{align}
\end{subequations}
where ${\mathcal{V}_{\rm P_{1-1}}}\!\!=\!\!\{{\bf w}_{ k},{{\bm \Theta}},\Delta f_{ n_{\rm t}},\!{\bf u}\}$. Unfortunately, problem (P1-1) is still hard to solve due to its non-convexity arising from the objective function and constraints \eqref{P1cons3} and \eqref{P1-1cons2}, as well as the coupling among multiple optimization variables. As such, it is difficult to obtain any useful insights into the considered FDA-RIS-aided ISAC system from solving problem (P1-1) directly.  To overcome this issue,  in the following subsection, we  turn our attention to a simple SUST scenario to theoretically demonstrate superior performance of the  FDA-aided ISAC system over its conventional  PA-aided counterpart.

{\noindent\textit{B. Performance Discussion}}

In this subsection, we focus on the SUST FDA-RIS-aided ISAC system, where the FDA-aided BS communicates with a single user in the presence of a single clutter with the aid of the RIS, implying that the user is also the target to be sensed (hereafter referred to as ``target" to avoid confusion). In particular, we assume that the target and the clutter are located at the same direction, i.e., $\theta^{{\rm RT}}_{\rm azi}\!=\!\theta^{{\rm RC}}_{\rm azi}$ and $\theta^{{\rm RT}}_{\rm ele}\!=\!\theta^{{\rm RC}}_{\rm ele}$, for the sake of clearly  demonstrating the 
superior clutter suppression performance of  the FDA  with distance-angle-dependent beampattern  over  the  conventional  PA. Moreover, the uniform transmit frequency offset is adopted for the FDA antennas for simplicity, i.e., $f_{n_{\rm t}}\!=\!f_{\rm fre}\!+\!(n_{\rm t}\!-\!1)\Delta f$ with $\Delta f$ representing the frequency offset increment\cite{Samefre}. The BS transmit beamforming vectors ${\bf w}_k$, the channel ${\bf H}_{{\rm BC},c}$,  the angles  $\{{\theta_{\rm azi}^{{\rm RC},c}},{\theta_{\rm ele}^{{\rm RC},c}}\}$ and the distance  $D_{{\rm RC},c}$  reduce to ${\bf w}_{\rm c}$, ${\bf H}_{{\rm BC}}$, $\{{\theta_{\rm azi}^{{\rm RC}}},{\theta_{\rm ele}^{{\rm RC}}}\}$ and $D_{{\rm RC}}$, respectively. Accordingly,   the   SCNR for this SUST system is given by 
\begin{align}\label{SUST_gam}
    \Gamma_{\rm SUST}\!&=\!\frac{\beta_{\rm T}\vert{\bf u}^{\rm H}{\bf H}_{\rm BT}{\bf w}_{\rm c}\vert^2}{\beta_{\rm C}\vert{\bf u}^{\rm H}{\bf H}_{{\rm BC}}{\bf w}_{\rm c}\vert^2\!+\!\Vert{\bf u}\Vert^2\sigma_{\rm R}^2}\notag\\
    &=\!\frac{{\beta}_{\rm BR}^2{\beta}_{\rm T}\vert{\bf u}^{\rm H}{\bf b}_{\rm RB,r}\vert^2\vert{\bf b}_{\rm BR}^{\rm H}{\bm\Theta}{\bf a}_{\rm tar,0}^{\rm r}({\bf a}_{\rm tar,0}^{\rm t})^{\rm H}{{\bm\Theta}}{\bf b}_{\rm BR}\vert^2\vert{\bf b}_{\rm BT}^{\rm H}{\bf w}_{\rm c}\vert^2}{{\beta}_{\rm BR}^2{\beta}_{\rm C}\vert{\bf u}^{\rm H}{\bf b}_{\rm RB,r}\vert^2\vert{\bf b}_{\rm BR}^{\rm H}{\bm\Theta}{\bf a}_{\rm clu,0}^{\rm r}({\bf a}_{\rm clu,0}^{\rm t})^{\rm H}{{\bm\Theta}}{\bf b}_{\rm BR}\vert^2\vert{\bf b}_{\rm BC}^{\rm H}{\bf w}_{\rm c}\vert^2\!+\!\Vert{\bf u}\Vert^2\sigma_{\rm R}^2},
\end{align}
where the vector set $\{{\bf a}_{{\rm tar},0}^{\rm t},{\bf a}_{{\rm tar},0}^{\rm r},{\bf a}_{{\rm clu},0}^{\rm t},{\bf a}_{{\rm clu},0}^{\rm r}\}$ is obtained by replacing $f_{n_{\rm t}},\forall n_{\rm t}\!\in\!\mathcal{N}_{\rm t}$ involved in $\{{\bf a}_{{\rm tar},n_{\rm t}}^{\rm t},{\bf a}_{{\rm tar},n_{\rm t}}^{\rm r},{\bf a}_{{c},n_{\rm t}}^{\rm t},{\bf a}_{c,n_{\rm t}}^{\rm r}\}$ with $f_{\rm ref}$. The vectors ${\bf b}_{\rm BR}\!\in\!\mathbb{C}^{M\times 1}$ and ${\bf b}_{\rm RB,r}\!\in\!\mathbb{C}^{N_{\rm r}\times 1}$ are respectively given by 
\begin{align}
    &[{\bf b}_{\rm BR}]_m\!=\!e^{-j2\pi\frac{f_{\rm ref}}{\rm c}((m_{\rm ele}\!-\!1)d_{\rm ele}\sin\theta_{\rm azi}^{\rm BR}\sin\theta_{\rm ele}^{\rm BR}\!+\!(m_{\rm azi}\!-\!1)d_{\rm azi}\sin\theta_{\rm azi}^{\rm BR}\cos\theta_{\rm ele}^{\rm BR})},\forall m\!\in\!\mathcal{M},\notag\\
    &[{\bf b}_{\rm RB,r}]_{n_{\rm r}}\!=\!e^{-j2\pi\frac{f_{\rm ref}}{\rm c}(n_{\rm r}-1)d_{\rm B}\sin\psi^{\rm BR}},\forall n_{\rm r}\!\in\!\mathcal{N}_{\rm r},
\end{align}
which are in essence the receive steering vectors at the conventional UPA-based RIS and ULA-based BS, respectively, whereas the vectors ${\bf b}_{\rm BT}\!\in\!\mathbb{C}^{N_
{\rm t}\times 1}$ and ${\bf b}_{\rm BC}\!\in\!\mathbb{C}^{N_
{\rm t}\times 1}$ are introduced to collect the exponential terms corresponding to FDA frequency offsets, i.e., $e^{j2\pi\frac{f_{n_{\rm t}}}{\rm c}D_{x}},x\!\in\!\{{\rm BR},{\rm RT},{\rm RC}\},\forall n_{\rm t}\!\in\!\mathcal{N}_{\rm t}$, which are defined as
\begin{align}
    [{\bf b}_{\rm BT}]_{n_{\rm t}}\!&=\!e^{-j2\pi(-\frac{f_{n_{\rm t}}}{\rm c}(2 D_{\rm BR}+2 D_{\rm RT})+\frac{f_{\rm ref}}{\rm c}(n_{\rm t}-1)d_{\rm B}\sin\theta_{\rm t}^{\rm BR})},\\
    [{\bf b}_{\rm BC}]_{n_{\rm t}}\!&=\!e^{-j2\pi(-\frac{f_{n_{\rm t}}}{\rm c}(2 D_{\rm BR}+2 D_{\rm RC})+\frac{f_{\rm ref}}{\rm c}(n_{\rm t}-1)d_{\rm B}\sin\theta_{\rm t}^{\rm BR})},~n_{\rm t}\!\in\!\mathcal{N}_{\rm t}.
\end{align}

Next,  in order to analytically demonstrate that the FDA  is  capable  of enhancing target  detection performance, given any RIS phase shift matrix ${\bm\Theta}$, we  aim  to  maximize  SCNR $\Gamma_{\rm SUST}$  by jointly optimizing the BS transmit beamforming vector ${\bf w}_{\rm c}$, the RIS phase shift matrix ${\bm\Theta}$, the radar receive equalizer ${\bf u}$ and the FDA transmit frequency offset increment $\Delta f$. Mathematically, the optimization problem is formulated as
    \begin{align}
        ({\text{P2}}):~  \underset{{\bf w}_{\rm c},{\bf u},\Delta f}
        {{\max}}~\Gamma_{\rm SUST},~~
        {\rm {s.t.}}~~
        \Vert {\bf w}_{\rm c}\Vert^2\leq P_{\rm B},
        ~~\Delta f\in(0,\Delta f_{\max}],,
    \end{align}
where $\Delta f_{\max}$ denotes the maximum allowable frequency offset increment. Problem (P2) is challenging to solve due to the non-convex fractional objective function and the highly coupled variables. Fortunately, by exploiting the its inherent characteristics, we have the following proposition.
    \begin{proposition}\label{prop-optwu}
        The optimal SCNR to problem (P2) for the SUST FDA-RIS-aided ISAC system is written as
        \begin{align}\label{SUST_FDA}
            {\Gamma}_{\rm SUST,max}^{\rm FDA}\!&=\!\frac{{\beta}_{\rm T}\vert p_{\rm tar}\vert^2N_{\rm r}P_{\rm B}}{\sigma_{\rm R}^2}\big(N_{\rm t}\!-\!\frac{\beta_{\rm C}P_{\rm B}\vert p_{\rm tar}\vert^2N_{\rm r}\big\vert\frac{\sin({N_{\rm t}}{2\pi\Delta f^{\rm opt}\Delta D}/{{\rm c}})}{\sin({2\pi\Delta f^{\rm opt}\Delta D}/{{\rm c}})}\big\vert^2}{\sigma_{\rm R}^2\!+\!\beta_{\rm C}P_{\rm B}N_{\rm t}\vert p_{\rm tar}\vert^2N_{\rm r}}\big),
        \end{align}
        with $\Delta f^{\rm opt}\!=\!\min\{\Delta f_{\max},\Delta f_0\}$. For its PA-RIS-aided counterpart, the optimal SCNR is expressed as
        \begin{align}\label{SUST_PA}
            {\Gamma}_{\rm SUST}^{\rm PA}\!=\!\frac{{\beta}_{\rm T}\vert p_{\rm tar}\vert^2N_{\rm r}P_{\rm B}}{\sigma_{\rm R}^2}\big(N_{\rm t}\!-\!\frac{\beta_{\rm C}P_{\rm B}\vert p_{\rm tar}\vert^2N_{\rm r}N_{\rm t}^2}{\sigma_{\rm R}^2\!+\!\beta_{\rm C}P_{\rm B}N_{\rm t}\vert p_{\rm tar}\vert^2N_{\rm r}}\big).
        \end{align}
        It follows from \eqref{SUST_FDA} and \eqref{SUST_PA} that ${\Gamma}_{\rm SUST}^{\rm FDA}\!\geq\!{\Gamma}_{\rm SUST}^{\rm PA}$ always holds with a maximum achievable SCNR increment of $\Delta\Gamma_{\max}\!=\!\frac{{\beta}_{\rm T}{\beta}_{\rm C}\vert p_{\rm tar}\vert^4N_{\rm r}^2P_{\rm B}^2(N_{\rm t}^2-\big\vert\frac{\sin({N_{\rm t}}{2\pi\Delta f^{\rm opt}\Delta D}/{{\rm c}})}{\sin({2\pi\Delta f^{\rm opt}\Delta D}/{{\rm c}})}\big\vert^2)}{\sigma_{\rm R}^2({\sigma_{\rm R}^2\!+\!\beta_{\rm C}P_{\rm B}N_{\rm t}\vert p_{\rm tar}\vert^2N_{\rm r}})}$.
    \end{proposition}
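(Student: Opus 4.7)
The plan is to reduce problem (P2) by sequentially eliminating ${\bf u}$, ${\bf w}_{\rm c}$ and $\Delta f$ in that order, exploiting the rank-one and geometric-series structure that emerges once the common-angle assumption on the target and clutter is used. Since $(\theta_{\rm azi}^{\rm RT},\theta_{\rm ele}^{\rm RT})=(\theta_{\rm azi}^{\rm RC},\theta_{\rm ele}^{\rm RC})$, the reference-frequency steering vectors $\{{\bf a}_{\rm tar,0}^{\rm t/r}\}$ and $\{{\bf a}_{\rm clu,0}^{\rm t/r}\}$ differ only by common range-dependent unit-modulus phases, so both RIS-reflection moduli in \eqref{SUST_gam} equal a common value $|p_{\rm tar}|$ that factors out of the numerator and of the clutter term of the denominator. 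The equalizer then appears only through $|{\bf u}^{\rm H}{\bf b}_{\rm RB,r}|^{2}/\Vert{\bf u}\Vert^{2}$, which Cauchy--Schwarz maximizes at ${\bf u}^{\rm opt}\propto{\bf b}_{\rm RB,r}$ with optimum $\Vert{\bf b}_{\rm RB,r}\Vert^{2}=N_{\rm r}$, independently of ${\bf w}_{\rm c}$ and $\Delta f$.

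With these simplifications, the transmit-beamformer subproblem reduces to
\begin{equation*}
\max_{\Vert{\bf w}_{\rm c}\Vert^{2}\le P_{\rm B}}\ \frac{A\,|{\bf b}_{\rm BT}^{\rm H}{\bf w}_{\rm c}|^{2}}{B\,|{\bf b}_{\rm BC}^{\rm H}{\bf w}_{\rm c}|^{2}+C},
\end{equation*}
where $A=\beta_{\rm T}|p_{\rm tar}|^{2}N_{\rm r}$, $B=\beta_{\rm C}|p_{\rm tar}|^{2}N_{\rm r}$ and $C=\sigma_{\rm R}^{2}$. A one-dimensional argument on ${\bf w}_{\rm c}=t\tilde{\bf w}$ shows the ratio is strictly increasing in $t$ (numerator $\propto t^{2}$, denominator has a constant floor $C$), so the power constraint is active and the problem becomes a generalized Rayleigh quotient whose maximum equals $\Gamma^{\star}=AP_{\rm B}\,{\bf b}_{\rm BT}^{\rm H}(BP_{\rm B}{\bf b}_{\rm BC}{\bf b}_{\rm BC}^{\rm H}+C{\bf I})^{-1}{\bf b}_{\rm BT}$. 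Applying the Sherman--Morrison identity together with $\Vert{\bf b}_{\rm BT}\Vert^{2}=N_{\rm t}$ collapses this to
\begin{equation*}
\Gamma^{\star}=\frac{AP_{\rm B}}{C}\Bigl(N_{\rm t}-\frac{BP_{\rm B}\,|{\bf b}_{\rm BT}^{\rm H}{\bf b}_{\rm BC}|^{2}}{C+BP_{\rm B}N_{\rm t}}\Bigr),
\end{equation*}
which already has the outer algebraic form of \eqref{SUST_FDA} and \eqref{SUST_PA}; the entire dependence on waveform design is now compressed into the scalar $|{\bf b}_{\rm BT}^{\rm H}{\bf b}_{\rm BC}|^{2}$.

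It remains to optimize this scalar over $\Delta f$. Evaluating ${\bf b}_{\rm BT}^{\rm H}{\bf b}_{\rm BC}$ elementwise, the BS-RIS angular phases (identical in ${\bf b}_{\rm BT}$ and ${\bf b}_{\rm BC}$) and the round-trip $D_{\rm BR}$ terms cancel, leaving $\sum_{n_{\rm t}=1}^{N_{\rm t}}e^{-j4\pi f_{n_{\rm t}}\Delta D/{\rm c}}$ with $\Delta D\triangleq D_{\rm RT}-D_{\rm RC}$. Substituting the uniform offset $f_{n_{\rm t}}=f_{\rm fre}+(n_{\rm t}-1)\Delta f$ turns this into a geometric series whose modulus is exactly the Dirichlet-type kernel $|\sin(N_{\rm t}\cdot 2\pi\Delta f\Delta D/{\rm c})/\sin(2\pi\Delta f\Delta D/{\rm c})|$ in \eqref{SUST_FDA}. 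Because $\Gamma^{\star}$ is affine-decreasing in this squared kernel, maximizing SCNR amounts to minimizing the kernel on $(0,\Delta f_{\max}]$: its first zero on $(0,\infty)$ is at $\Delta f_{0}\triangleq{\rm c}/(2N_{\rm t}\Delta D)$, and a short derivative calculation shows strict monotone decrease on the main lobe $(0,\Delta f_{0}]$, so $\Delta f^{\rm opt}=\min\{\Delta f_{\max},\Delta f_{0}\}$. Taking $\Delta f\to 0$ in $\Gamma^{\star}$ (equivalent to the PA baseline, where the kernel attains its maximum $N_{\rm t}^{2}$) produces \eqref{SUST_PA}; subtracting then gives $\Gamma_{\rm SUST,max}^{\rm FDA}\ge\Gamma_{\rm SUST}^{\rm PA}$ with the advertised maximum gap $\Delta\Gamma_{\max}$.

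The main obstacle I anticipate is precisely this last step: the clean ``$\min\{\Delta f_{\max},\Delta f_{0}\}$'' formula rests on the monotone decrease of $|\sin(N_{\rm t}x)/\sin(x)|^{2}$ on its main lobe $(0,\pi/N_{\rm t}]$, which must be justified rigorously (e.g., by verifying that the derivative is strictly negative there, equivalently that the Dirichlet-type kernel admits no spurious interior minimum on that interval); without this, the optimization over $\Delta f$ would only be local. Everything else---pulling out $|p_{\rm tar}|$, the Cauchy--Schwarz step for ${\bf u}$, the Sherman--Morrison step for ${\bf w}_{\rm c}$, and the geometric-series summation---is routine algebra.
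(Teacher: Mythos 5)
Your proposal is correct and follows essentially the same route as the paper's Appendix B: sequential elimination of ${\bf u}$, then ${\bf w}_{\rm c}$ (active power constraint, generalized Rayleigh quotient, Sherman--Morrison collapse using $\Vert{\bf b}_{\rm BT}\Vert^2=\Vert{\bf b}_{\rm BC}\Vert^2=N_{\rm t}$), then $\Delta f$ via the Dirichlet-type kernel $\vert{\bf b}_{\rm BT}^{\rm H}{\bf b}_{\rm BC}\vert^2$ and its monotone decrease on the main lobe, with the PA case recovered at $\Delta f\to 0$. The only cosmetic difference is that you dispatch ${\bf u}$ by noting the SCNR depends on it only through $\vert{\bf u}^{\rm H}{\bf b}_{\rm RB,r}\vert^2/\Vert{\bf u}\Vert^2$ and applying Cauchy--Schwarz, whereas the paper invokes the generalized Rayleigh quotient solution and then the matrix inversion lemma; both yield ${\bf u}\propto{\bf b}_{\rm RB,r}$ and the same intermediate ratio, and your flagged concern about rigorously establishing the main-lobe monotonicity is a point the paper itself simply asserts.
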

\begin{proof}
    Please refer to Appendix~\ref{app-optwu}.
\end{proof}

It is clear from Proposition~\ref{prop-optwu} that the FDA-aided BS always outperforms the PA-aided BS in terms of clutter suppression, thereby yielding a higher SCNR. Moreover, it can be readily inferred from $\Delta\Gamma_{\max}$ the corresponding SCNR increment exhibits linear growth with the BS transmit power and the number of BS receive antennas, i.e., $\Delta\Gamma_{\max}$ increases with asymptotically large $P_{\rm B}$ and $N_{\rm r}$ according to scaling orders of $\mathcal{O}(P_{\rm B})$ and $\mathcal{O}(N_{\rm r})$, respectively.

\section{Joint Active-Passive Beamforming and frequency offsets Optimization}\label{S3}

In this section, we firstly reformulate problem (P1-1) into a tractable parametric subtraction form by utilizing the FP-based technique, and then propose an efficient SADMM-SCA-based AO algorithm to jointly optimize the BS transmit beamforming vectors ${\bf w}_k$, the RIS phase shift matrix ${{\bm \Theta}}$, the FDA frequency offsets $\Delta f_{n_{\rm t}}$ and the radar receive equalizer ${\bf u}$ until convergence.

{\noindent\textit{A. FP-based Problem Reformulation}}

In this subsection, we intend to leverage the FP-based technique to rewrite the original problem (P1-1) into a parametric subtraction form, as demonstrated in the following Proposition.
\begin{proposition}\label{prop1}
    Problem (P1-1) can be equivalently reformulated as
    \begin{subequations}\label{P2}
        \begin{align}
            ({\text{P3}}):~  \underset{{{\mathcal{V}_{\rm P_3}}}}
            {{\max}}~~&\sum\limits_{k=1}^{K}\big( \log(1\!+\!w_k)\!-\!w_k\!+\!2\sqrt{1\!+\!w_k}\Re\{\alpha_k^*\tilde{\bf h}_k^{\rm H}{\bf w}_k\}\!-\!\vert\alpha_k\vert^2\big(\sum\limits_{k'=1}^{K}\vert\tilde{\bf h}^{\rm H}_k{\bf w}_{k'}\vert^2\!+\!\sigma_k^2\big) \big) \label{P2obj} \\
            {\rm {s.t.}}~~&\eqref{P1-1cons1},\eqref{P1-1cons2},\eqref{P1-1cons3},
        \end{align}
    \end{subequations}
    where $\mathcal{V}_{\rm P_3}=\{\alpha_{ k},w_{ k},{\bf w}_{ k},{\bm \Theta},\Delta f_{ n_{\rm t}},{\bf u}\}$.
\end{proposition}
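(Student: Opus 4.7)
The plan is to establish the equivalence by applying the two-stage fractional programming procedure of Shen and Yu to the log–SINR objective in (P1-1), while leaving the constraints \eqref{P1-1cons1}, \eqref{P1-1cons2}, and \eqref{P1-1cons3} untouched because they do not involve the auxiliary variables $\{\alpha_k,w_k\}$. First I would rewrite each rate term as $\log(1+\gamma_k)$ with $\gamma_k\!=\!|\tilde{\mathbf{h}}_k^{\mathrm H}\mathbf{w}_k|^2/\big(\sum_{k'\neq k}|\tilde{\mathbf{h}}_k^{\mathrm H}\mathbf{w}_{k'}|^2+\sigma_k^2\big)$, noting that the feasible set keeps the denominator strictly positive so that the SINR is well-defined and differentiable in the block $\{\mathbf w_k\}$.

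Next I would apply the Lagrangian Dual Transform to pull the SINR out of the logarithm, introducing one auxiliary variable $w_k\!\ge\!0$ per user and using the identity
\begin{equation*}
\log(1+\gamma_k)\;=\;\max_{w_k\ge 0}\;\Big\{\log(1+w_k)\!-\!w_k\!+\!\frac{(1+w_k)\gamma_k}{1+\gamma_k}\Big\},
\end{equation*}
which can be verified by differentiating the right-hand side in $w_k$, setting the derivative to zero to get the stationary point $w_k^\star\!=\!\gamma_k$, and checking that the objective is concave in $w_k$. Substituting back, the fraction $(1+w_k)\gamma_k/(1+\gamma_k)$ becomes $(1+w_k)|\tilde{\mathbf h}_k^{\mathrm H}\mathbf w_k|^2\big/\big(\sum_{k'=1}^{K}|\tilde{\mathbf h}_k^{\mathrm H}\mathbf w_{k'}|^2+\sigma_k^2\big)$, i.e., a standard quadratic-over-positive-quadratic ratio in $\{\mathbf w_k\}$.

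Then I would apply the Quadratic Transform to each such ratio, introducing a complex auxiliary variable $\alpha_k$ and using
\begin{equation*}
\frac{|x|^2}{y}\;=\;\max_{\alpha\in\mathbb C}\;\big\{2\,\Re\{\alpha^{*}x\}-|\alpha|^2 y\big\},\qquad y>0,
\end{equation*}
with the identification $x\!=\!\sqrt{1+w_k}\,\tilde{\mathbf h}_k^{\mathrm H}\mathbf w_k$ and $y\!=\!\sum_{k'=1}^{K}|\tilde{\mathbf h}_k^{\mathrm H}\mathbf w_{k'}|^2+\sigma_k^2$. The optimum is $\alpha_k^\star\!=\!\sqrt{1+w_k}\,\tilde{\mathbf h}_k^{\mathrm H}\mathbf w_k/y$, and substituting $\alpha_k^\star$ into the resulting expression reproduces the original log-SINR exactly. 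Summing over $k$ yields the objective \eqref{P2obj} of (P3), so the equivalence in the sense that $\max_{\mathcal V_{\mathrm{P_{1\text{-}1}}}}\hat R_{\mathrm{sum}}\!=\!\max_{\mathcal V_{\mathrm{P_3}}}\eqref{P2obj}$ follows, with the optimizers in $\{\mathbf w_k,\bm\Theta,\Delta f_{n_{\mathrm t}},\mathbf u\}$ coinciding.

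I do not expect any real obstacle here: the only subtle point is to confirm the positivity of the denominator $y$ (so that the Quadratic Transform applies as stated) and the non-negativity required for $w_k$ in the LDT, both of which follow from the problem data ($\sigma_k^2\!>\!0$ and the SINR is non-negative). Once these technical conditions are checked and the two maximizations are pulled outside the sum, the equivalence with (P3) is immediate, completing the proof of Proposition~\ref{prop1}.
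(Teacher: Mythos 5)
Your proposal is correct and follows essentially the same route as the paper: the equivalence rests on the two closed-form optima $\alpha_k^{\rm opt}=\sqrt{1+w_k}\,\tilde{\bf h}_k^{\rm H}{\bf w}_k/(\sum_{k'}|\tilde{\bf h}_k^{\rm H}{\bf w}_{k'}|^2+\sigma_k^2)$ and $w_k^{\rm opt}=\gamma_k$, which is exactly what the paper's Appendix computes via first-order optimality, merely presented in the reverse direction (the paper optimizes the auxiliaries out of (P3) to recover $\sum_k R_k$, while you build (P3) from (P1-1) via the quadratic and Lagrangian-dual transforms). Your additional checks on the positivity of the denominator and the non-negativity of $w_k$ are fine and do not change the argument.
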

\begin{proof}
    Please refer to Appendix~\ref{appA}.
\end{proof}
Unfortunately, the reformulated problem (P3) is still a complicated non-convex problem with highly coupled variables even though its objective function \eqref{P2obj} is more tractable as compared to the original form \eqref{P1-1obj}. In the following, we aim to develop an efficient SADMM-SCA-based AO algorithm to address problem (P3).

{\noindent\textit{B. Proposed SADMM-SCA-based AO Algorithm}}

In this subsection, we firstly decompose the optimization variables of problem (P3) into four blocks, i.e., ${\bf w}_k$, $\{\alpha_k,w_k,{\bf u}\}$, ${\bm\Theta}$ and $\Delta f_{n_{\rm t}}$. Then, we alternately and iteratively solve the corresponding four subproblems until convergence is reached.

{\noindent\textit{1) Optimization of ${\bf w}_k$}}

Given $\{\alpha_k,w_k,{\bm\Theta},\Delta f_{n_{\rm t}},{\bf u}\}$, the subproblem w.r.t. $\{{\bf w}_k\}_{k\in\mathcal{K}}$  can be expressed as 
\begin{subequations}\label{P21}
    \begin{align}
        ({\text{P3-1}}):~  \underset{\{{\bf w}_{ k}\}}
        {{\min}}~~
        &\sum\limits_{k=1}^{K} \big(\vert\alpha_k\vert^2\big(\sum\limits_{k'=1}^{K}\vert\tilde{\bf h}^{\rm H}_k{\bf w}_{k'}\vert^2\!+\!\sigma_k^2\big)\!-\!2\sqrt{1\!+\!w_k}\Re\{\alpha_k^*\tilde{\bf h}_k^{\rm H}{\bf w}_k\}\big) \label{P21obj} \\
        {\rm {s.t.}}~~&    \eqref{P1-1cons1},\eqref{P1-1cons2}.
    \end{align}
\end{subequations}
Clearly, the main difficulty for solving problem (P3-1) arises from the non-convex constraint \eqref{P1-1cons2}. To effectively address this issue, we explore the inherent characteristics of problem (P3-1) in the following proposition.
\begin{proposition}\label{prop-new-new-2}
    The constraint \eqref{P1-1cons1} in problem (P3-1) must be active at the optimum. 
\end{proposition}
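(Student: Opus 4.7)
The plan is to prove this by contradiction through a uniform scaling argument on the beamformers. Suppose $\{{\bf w}_k^{\star}\}$ is an optimal solution to (P3-1) with $P_0\triangleq\sum_{k=1}^{K}\|{\bf w}_k^{\star}\|^2 < P_{\rm B}$. I would consider the scaled candidate $\hat{\bf w}_k(\mu)=\mu{\bf w}_k^{\star}$ for some $\mu>1$ and exhibit a $\mu\in(1,\sqrt{P_{\rm B}/P_0}]$ for which $\{\hat{\bf w}_k(\mu)\}$ is feasible and strictly lowers the objective \eqref{P21obj}, contradicting the optimality of $\{{\bf w}_k^{\star}\}$. Feasibility of \eqref{P1-1cons1} is immediate since $\mu^2 P_0\leq P_{\rm B}$ by construction. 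For \eqref{P1-1cons2}, direct substitution yields
\begin{equation*}
\Gamma(\mu)=\frac{\mu^2\beta_{\rm T}\sum_k|{\bf u}^{\rm H}{\bf H}_{\rm BT}{\bf w}_k^{\star}|^2}{\mu^2\sum_{c,k}\beta_c|{\bf u}^{\rm H}{\bf H}_{{\rm BC},c}{\bf w}_k^{\star}|^2+\|{\bf u}\|^2\sigma_{\rm R}^2},
\end{equation*}
whose derivative in $\mu$ equals $2\mu\beta_{\rm T}\|{\bf u}\|^2\sigma_{\rm R}^2\sum_k|{\bf u}^{\rm H}{\bf H}_{\rm BT}{\bf w}_k^{\star}|^2$ divided by the squared denominator and is therefore strictly positive on $\mu>0$ due to the additive noise term; hence $\Gamma(\mu)\geq\Gamma(1)\geq\gamma_{\rm T}$ for all $\mu\geq 1$, so the SCNR constraint is preserved (in fact made strict) under the scaling.

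Plugging $\hat{\bf w}_k(\mu)$ into \eqref{P21obj} then reduces the objective to the one-dimensional convex quadratic $f(\mu)=A\mu^2-2B\mu+C$ with $A=\sum_k|\alpha_k|^2\sum_{k'}|\tilde{\bf h}_k^{\rm H}{\bf w}_{k'}^{\star}|^2\geq 0$, $B=\sum_k\sqrt{1+w_k}\,\Re\{\alpha_k^{*}\tilde{\bf h}_k^{\rm H}{\bf w}_k^{\star}\}$, and a $\mu$-independent constant $C$. The factorization $f(\mu)-f(1)=(\mu-1)\bigl[(\mu+1)A-2B\bigr]$ makes a strict decrease available whenever the unconstrained minimizer $\mu^{\star}=B/A$ exceeds unity: choosing $\mu=\min\{\mu^{\star},\sqrt{P_{\rm B}/P_0}\}>1$ then gives $f(\mu)<f(1)$ and closes out the contradiction.

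The main obstacle is therefore to establish $B>A$ at any candidate interior optimum. My plan is to derive this from the KKT system of (P3-1): if the power multiplier were zero, stationarity in ${\bf w}_k$ would pin down the closed form ${\bf w}_k^{\star}=\sqrt{1+w_k}\,({\bf M}+\mu_{\rm S}{\bf Q})^{-1}\alpha_k\tilde{\bf h}_k$, where ${\bf M}=\sum_j|\alpha_j|^2\tilde{\bf h}_j\tilde{\bf h}_j^{\rm H}\succeq{\bf 0}$, ${\bf Q}=\gamma_{\rm T}\sum_c\beta_c{\bf H}_{{\rm BC},c}^{\rm H}{\bf u}{\bf u}^{\rm H}{\bf H}_{{\rm BC},c}-\beta_{\rm T}{\bf H}_{\rm BT}^{\rm H}{\bf u}{\bf u}^{\rm H}{\bf H}_{\rm BT}$ is the Hermitian curvature matrix of \eqref{P1-1cons2} in ${\bf w}_k$, and $\mu_{\rm S}\geq 0$ is the SCNR multiplier. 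Substituting this expression into $A$ and $B$ and invoking the complementary slackness $\mu_{\rm S}\bigl(\sum_k{\bf w}_k^{\star\,\rm H}{\bf Q}{\bf w}_k^{\star}+\gamma_{\rm T}\|{\bf u}\|^2\sigma_{\rm R}^2\bigr)=0$ should collapse $B-A$ into a non-negative quadratic form that is strictly positive off a degenerate null set. The residual degenerate case $B=A$ can be absorbed by a small non-radial perturbation of ${\bf w}_k^{\star}$ along $\tilde{\bf h}_k$ that strictly reduces the linear term $-2\sqrt{1+w_k}\,\Re\{\alpha_k^{*}\tilde{\bf h}_k^{\rm H}{\bf w}_k\}$, where the strict SCNR slack created by scaling up guarantees that the perturbed beamformers remain feasible, thereby completing the contradiction and forcing $\sum_k\|{\bf w}_k^{\star}\|^2=P_{\rm B}$.
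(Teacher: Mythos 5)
Your scaling framework is set up correctly (the monotonicity of the SCNR in $\mu$ and the factorization $f(\mu)-f(1)=(\mu-1)\bigl[(\mu+1)A-2B\bigr]$ are both right), but the argument collapses at the pivotal step: the inequality $B>A$ you need is not merely unproven --- it comes out with the opposite sign when you carry out your own KKT plan. With the power multiplier zero, stationarity gives $({\bf M}+\mu_{\rm S}{\bf Q}){\bf w}_k^{\star}=\sqrt{1+w_k}\,\alpha_k\tilde{\bf h}_k$; left-multiplying by ${\bf w}_k^{\star\,\rm H}$ and summing over $k$ yields $B=\sum_k {\bf w}_k^{\star\,\rm H}{\bf M}{\bf w}_k^{\star}+\mu_{\rm S}\sum_k{\bf w}_k^{\star\,\rm H}{\bf Q}{\bf w}_k^{\star}=A+\mu_{\rm S}\sum_k{\bf w}_k^{\star\,\rm H}{\bf Q}{\bf w}_k^{\star}$. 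Complementary slackness forces $\sum_k{\bf w}_k^{\star\,\rm H}{\bf Q}{\bf w}_k^{\star}=-\gamma_{\rm T}\Vert{\bf u}\Vert^2\sigma_{\rm R}^2<0$ whenever $\mu_{\rm S}>0$, so in every case $B-A=-\mu_{\rm S}\gamma_{\rm T}\Vert{\bf u}\Vert^2\sigma_{\rm R}^2\le 0$. Consequently $(\mu+1)A-2B\ge 2(A-B)\ge 0$ for all $\mu>1$ and your radial scaling weakly \emph{increases} the objective \eqref{P21obj} --- which is exactly what stationarity predicts. Indeed, with $\alpha_k$ frozen, \eqref{P21obj} is a convex quadratic in $\{{\bf w}_k\}$ whose unconstrained minimizer is finite (the linear-term vectors $\alpha_k\tilde{\bf h}_k$ lie in the range of ${\bf M}$), so for large $P_{\rm B}$ a genuinely interior optimum can exist and no small perturbation, radial or not, can improve it; your fallback for the ``degenerate case'' fails for the same reason.

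The ingredient you are missing is the one the paper's proof is built on: the scaling must be applied jointly to ${\bf w}_k$ and $\alpha_k$, namely $\hat{\bf w}_k=\sqrt{\nu_1}\,\bar{\bf w}_k$ together with $\hat\alpha_k=\bar\alpha_k/\sqrt{\nu_1}$, where $\nu_1=P_{\rm B}/\sum_k\Vert\bar{\bf w}_k\Vert^2>1$. This leaves the bilinear term $\Re\{\alpha_k^*\tilde{\bf h}_k^{\rm H}{\bf w}_k\}$ and the interference term $\vert\alpha_k\vert^2\sum_{k'}\vert\tilde{\bf h}_k^{\rm H}{\bf w}_{k'}\vert^2$ invariant, strictly shrinks the noise term $\vert\alpha_k\vert^2\sigma_k^2$ (a constant in your fixed-$\alpha_k$ formulation and therefore invisible to your argument), and strictly relaxes \eqref{P1-1cons2}. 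In other words, the activity of \eqref{P1-1cons1} is really a property of problem (P3), in which $\alpha_k$ is an optimization variable; it does not follow for (P3-1) with $\alpha_k$ literally frozen, and the paper's subsequent reparametrization into (P3-1-1) accordingly rescales $\alpha_k$ as well as ${\bf w}_k$.
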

\begin{proof}
    Please refer to Appendix~\ref{app-prop-new-new-2}.
\end{proof}

Motivated by Proposition 4, problem (P3-1) can be  much simplified. Specifically, based on the variable transformation ${\bf w}_k\!=\!\hat{\nu}_1 \hat{\bf w}_k$ and ${\alpha}_k\!=\!\frac{\hat{\alpha}_k}{\hat{\nu}_1}$ with ${\hat{\nu}_1}\!=\!\sqrt{\frac{P_{\rm B}}{\sum\limits_{k=1}^{K}\Vert\hat{\bf w}_k\Vert^2}}$, we can incorporate the constraint \eqref{P1-1cons1} into the objective function \eqref{P21obj} to reformulate it as
\begin{subequations}\label{P211}
    \begin{align}
        ({\text{P3-1-1}}):~  \underset{\{\hat{\bf w}_{ k}\}}
        {{\min}}~
        &\sum\limits_{k=1}^{K} \big(\vert\hat{\alpha}_k\vert^2\big(\sum\limits_{k'=1}^{K}\vert\tilde{\bf h}^{\rm H}_k\hat{\bf w}_{k'}\vert^2\!+\!\frac{\sigma_k^2}{P_{\rm B}}\sum\limits_{k'=1}^{K}\Vert\hat{\bf w}_{k'}\Vert^2\big)\!-\!2\sqrt{1\!+\!w_k}\Re\{\hat{\alpha}_k^*\tilde{\bf h}_k^{\rm H}\hat{\bf w}_k\}\big) \label{P211obj} \\
        {\rm {s.t.}}~~&\gamma_{\rm T}({P_{\rm B}\sum\limits_{c=1}^{C}\!\sum\limits_{k=1}^{K}\!\beta_{c}\vert{\bf u}^{\rm H}{\bf H}_{{\rm BC},c}\hat{\bf w}_k\vert^2\!+\!\Vert{\bf u}\Vert^2\sigma_{\rm R}^2\sum\limits_{k=1}^{K}\Vert\hat{\bf w}_k\Vert^2})\!\notag\\
        &-\!{P_{\rm B}\beta_{\rm T}\sum\limits_{k=1}^{K}\vert{\bf u}^{\rm H}{\bf H}_{\rm BT}\hat{\bf w}_k\vert^2}\!\leq\!0.\label{P211cons1}
    \end{align}
\end{subequations}

Obviously, problem (P3-1-1) is a quadratically constrained quadratic programming problem with a single constraint (QCQP-1), for which the strong duality strictly holds even though the constraint \eqref{P211cons1} is non-convex\cite[Chapter 5.2.4]{QCQP_cons1}. Therefore, by leveraging the Lagrangian dual theory, the optimal $\hat{\bf w}_k^{\rm opt}$ can be derived as
\begin{align}\label{optw}
    \hat{\bf w}_k^{\rm opt}\!&=\!\big(\sum\limits_{k'=1}^{K}\vert\hat{\alpha}_{k'}\vert^2(\tilde{\bf h}_{k'}\tilde{\bf h}_{k'}^{\rm H}\!+\!\frac{\sigma_{k'}^2}{P_{\rm B}}{\bf I}_{N_{\rm t}})\!+\!\mu_1{\bf P}_1\big)^{\dagger}\sqrt{1\!+\!w_k}\hat{\alpha}_k\tilde{\bf h}_k,
\end{align}
where ${\bf P}_1\!=\!\gamma_{\rm T}(P_{\rm B}\sum\limits_{c=1}^{C}\beta_{c}{\bf H}_{{\rm BC},c}^{\rm H}{\bf u}{\bf u}^{\rm H}{\bf H}_{{\rm BC},c}\!+\!\Vert{\bf u}\Vert^2\sigma_{\rm R}^2{\bf I}_{N_{\rm t}})\!-\!P_{\rm B}\beta_{\rm T}{\bf H}_{{\rm BT}}^{\rm H}{\bf u}{\bf u}^{\rm H}{\bf H}_{{\rm BT}}$. $\mu_1$ is the optimal dual variable associated with \eqref{P211cons1} which can be determined by the classical subgradient method\cite{SubMethod}. Finally, by recalling ${\bf w}_k\!=\!\hat{\nu}_1 \hat{\bf w}_k$, the optimal ${\bf w}_k$ can be obtained.

{\noindent\textit{2) Optimization of $\{\alpha_k, w_k, {\bf u}\}$} }

We clearly observe from problem (P3) that $\{\alpha_k,w_k\}$ are unconstrained optimization variables, whereas the equalizer ${\bf u}$ only appears in the SCNR constraint, all of which can be optimized in parallel. To be specific, given any $\{{\bf w}_k,{\bm\Theta},\Delta f_{n_{\rm t}}\}$, the subproblem w.r.t. $\{\alpha_k,w_k\}$ is formulated as 
\begin{align}
    ({\text{P3-2-1}}):~  \underset{\alpha_k,w_k}
    {{\max}}~~&  \log(1\!+\!w_k)\!-\!w_k\!+\!2\sqrt{1\!+\!w_k}\Re\{\alpha_k^*\tilde{\bf h}_k^{\rm H}{\bf w}_k\}\!-\!\vert\alpha_k\vert^2\big(\sum\limits_{k'=1}^{K}\vert\tilde{\bf h}^{\rm H}_k{\bf w}_{k'}\vert^2\!+\!\sigma_k^2\big),\forall k\!\in\!\mathcal{K}.
\end{align}
By applying the first-order optimality condition, the optimal $\alpha_k^{\rm opt}$ to problem (P3-2-1) is directly derived in the following closed-form, i.e.,
\begin{align}\label{a_opt}
    \alpha_k^{\rm opt}\!=\!\frac{\sqrt{1\!+\!w_k}\tilde{\bf h}_k^{\rm H}{\bf w}_k}{\sum\limits_{k'=1}^{K}\vert\tilde{\bf h}_{k}^{\rm H}{\bf w}_{k'}\vert^2\!+\!\sigma_k^2},\forall k\!\in\!\mathcal{K}.
\end{align}
Substituting $\alpha_k^{\rm opt}$ into the objective function of problem (P3-2-1) further yields
\begin{align}
    \log(1\!+\!w_k)\!-\!w_k\!+\!\frac{({1\!+\!w_k})\vert\tilde{\bf h}_k^{\rm H}{\bf w}_k\vert^2}{\sum\limits_{k'=1}^{K}\vert\tilde{\bf h}_{k}^{\rm H}{\bf w}_{k'}\vert^2\!+\!\sigma_k^2},\forall k\!\in\!\mathcal{K}.
\end{align}
Similar to deriving $\alpha_k^{\rm opt}$, the optimal $w_k$ is readily obtained as
\begin{align}
    w_k^{\rm opt}\!=\!\frac{\vert\tilde{\bf h}_k^{\rm H}{\bf w}_k\vert^2}{\sum\limits_{{k'\neq k,k'=1}}^{K}\vert\tilde{\bf h}_{k}^{\rm H}{\bf w}_{k'}\vert^2\!+\!\sigma_k^2},\forall k\!\in\!\mathcal{K}.
\end{align}

In  addition, it  follows  from problem (P3) again that  the optimization of ${\bf u}$ actually  belongs to  a feasibility-check problem, whose optimal solution aims to maximize the achievable SCNR, which renders more flexibility for the subsequent RIS phase shifts optimization, and thus enabling further enhancement of the sum rate.
As such, given $\{{\bf w}_k,{\bm\Theta},\Delta f_{n_{\rm t}}\}$,  the corresponding subproblem  w.r.t. ${\bf u}$ is expressed as
\begin{align}
    ({\text{P3-2-2}}):~  \underset{{\bf u}}
    {{\max}}~~&\frac{\beta_{\rm T}\sum\limits_{k=1}^{K}\vert{\bf u}^{\rm H}{\bf H}_{\rm BT}{\bf w}_k\vert^2}{\sum\limits_{c=1}^{C}\sum\limits_{k=1}^{K}\beta_c\vert{\bf u}^{\rm H}{\bf H}_{{\rm BC},c}{\bf w}_k\vert^2\!+\!\Vert{\bf u}\Vert^2\sigma_{\rm R}^2},
\end{align}
which is a standard generalized Rayleigh quotient problem, and its optimal solution can be derived in the following closed-form\cite{Rayquo}
\begin{align}
    {\bf u}^{\rm opt}\!=\!{\bm\mu}_{\rm eig}(\sum\limits_{k=1}^K{\bf H}_{\rm BT}{\bf w}_k{\bf w}_k^{\rm H}{\bf H}_{\rm BT}^{\rm H},\sum\limits_{c=1}^{C}\sum\limits_{k=1}^{K}\beta_c{\bf H}_{{\rm BC},c}{\bf w}_k{\bf w}_k^{\rm H}{\bf H}_{{\rm BC},c}^{\rm H}\!+\!\sigma_{\rm R}^2{\bf I}_{N_{\rm r}}),
\end{align}
where ${\bm\mu}_{\rm eig}({\bf A},{\bf B})$ denotes the normalized eigenvector of the matrix pencil $({\bf A},{\bf B})$ associated with the maximum eigenvalue.

{\noindent\textit{3) Optimization of ${\bm\Theta}$}}

In this subsection, we intend to optimize the RIS phase shift matrix ${\bm\Theta}$ while fixing the other variables. Inspired by the diagonal structure of ${\bm\Theta}$, we first introduce the RIS phase shift vector ${\bm\theta}\!=\!{\rm diag}\{{\bm\Theta}\}$ and utilize the equality ${\bf x}^{\rm H}{\bm\Theta}{\bf y}\!=\!{\bm\theta}^{\rm T}{{\rm diag}\{{\bf x}^{\rm H}\}{\bf y}}$ for any vectors ${\bf x},{\bf y}$ to re-express problem (P3) in terms of ${\bm \theta}$ as
\begin{subequations}\label{P231}
    \begin{align}
        ({\text{P3-3}}):~  \underset{{\bm\theta}}
        {{\min}}~~&f_{\rm RIS}({\bm\theta})\!\triangleq\!\sum\limits_{k=1}^{K}\!\big(\vert\alpha_k\vert^2\sum\limits_{k'=1}^{K}\vert{\bm\theta}^{\rm T}{\bf G}_{{\rm BU},k}{\bf w}_{k'}\vert^2\!-\!2\sqrt{1\!+\!w_k}\Re\{\alpha_k^*{\bm\theta}^{\rm T}{\bf G}_{{\rm BU},k}{\bf w}_k\}\big) \\
        {\rm {s.t.}}~~
        &{\rm Tr}({\bf T}_1{\bm\theta}{\bm\theta}^{\rm H}{\bf T}_2{\bm\theta}{\bm\theta}^{\rm H})\!\geq\!\gamma_{\rm T}\sum\limits_{c=1}^{C}\!{\rm Tr}({\bf R}_{1,c}{\bm\theta}{\bm\theta}^{\rm H}{\bf R}_{2,c}{\bm\theta}{\bm\theta}^{\rm H})\!+\!\gamma_{\rm T}\Vert{\bf u}\Vert^2\sigma_{\rm R}^2,\label{P231cons1}\\
        &\vert[{\bm\theta}]_m\vert\!=\!1,\forall m\!\in\!\mathcal{M},\label{P231cons2}
    \end{align}
\end{subequations}
where
\begin{subequations}\label{RISauxi}
    \begin{align}
        {\bf G}_{{\rm BU},k}\!&=\![{\rm diag}\{{\bf h}_{k,1}^{\rm H}\}{\bf h}_{{\rm BR,1}},\cdots,{\rm diag}\{{\bf h}_{k,N_{\rm t}}^{\rm H}\}{\bf h}_{{\rm BR,N_{\rm t}}}],\\
        {\bf T}_1\!&=\!\beta_{\rm T}{\rm diag}\{({\bf a}_{\rm tar,0}^{\rm r})^{\rm H}\}{\bf H}_{\rm RB,0}^{\rm H}{\bf u}{\bf u}^{\rm H}{\bf H}_{\rm RB,0}{\rm diag}\{{\bf a}_{\rm tar,0}^{\rm r}\},\\
        {\bf T}_2\!&=\!\sum\limits_{k=1}^{K}\!{\rm diag}\{{\bf a}_{\rm tar,0}^{\rm t}\}{\bf G}_{\rm BT}^{*}{\bf w}_k^{*}{\bf w}_k^{\rm T}{\bf G}_{\rm BT}^{\rm T}{\rm diag}\{({\bf a}_{\rm tar,0}^{\rm t})^{\rm H}\}.
    \end{align}
\end{subequations}
In \eqref{RISauxi}, ${\bf H}_{\rm RB,0}$ is obtained by replacing $f_{n_{\rm t}},\forall n_{\rm t}\!\in\!\mathcal{N}_{\rm t}$ involved in ${\bf H}_{{\rm RB},n_{\rm t}}$ with $f_{\rm ref}$, which is in essence the RIS-BS channel with the conventional PA-aided BS. The matrix ${\bf G}_{\rm BT}\!\in\!\mathbb{C}^{M\times N_{\rm t}}$ with $[{\bf G}_{\rm BT}]_{m,n_{\rm t}}\!=\![{\bf H}_{\rm BR}]_{m,n_{\rm t}}e^{-j2\pi\frac{f_{n_{\rm t}}}{{\rm c}}{{(D_{\rm BR}+2D_{\rm RT})}}},\forall m\!\in\!\mathcal{M},n_{\rm t}\!\in\!\mathcal{N}_{\rm t}$ is introduced to collect FDA frequency offsets. $\{{\bf R}_{1,c},{\bf R}_{2,c}\}$ are defined similarly to $\{{\bf T}_{1},{\bf T}_{2}\}$ with $\{\beta_{\rm T},{\bf a}_{\rm tar,0}^{\rm t},{\bf a}_{\rm tar,0}^{\rm r},{\bf G}_{\rm BT}\}$ replaced by $\{\beta_{c},{\bf a}_{c,0}^{\rm t},{\bf a}_{c,0}^{\rm r},{\bf G}_{{\rm BC},c}\}$. Notice that $\{{\bf a}_{c,0}^{\rm r},{\bf a}_{c,0}^{\rm t}\}$ are obtained by substituting $f_{n_{\rm t}}\!=\!f_{\rm ref}$ into $\{{\bf a}_{{c},n_{\rm t}}^{\rm r},{\bf a}_{{c},n_{\rm t}}^{\rm t}\}$ and ${\bf G}_{{\rm BC},c}\!\in\!\mathbb{C}^{M\times 1}$ is defined similarly to ${\bf G}_{{\rm BT}}$ by replacing $D_{\rm RT}$ with $D_{{\rm RC},c}$.

Unfortunately, problem (P3-3) is  still  difficult  to solve due to  the non-convex quartic constraint \eqref{P231cons1}. Hereafter, we adopt the majorization-minimization (MM) methodology to tackle this issue\cite{RIS_ISAC3}. Specifically, let us define ${\bm\psi}\!=\!{\rm vec}({{\bm\theta}{\bm\theta}^{\rm H}})$, then the left-hand-side of \eqref{P231cons1} is lower-bounded by
\begin{align}\label{LB}
    {\rm Tr}({\bf T}_{1}{\bm\theta}{\bm\theta}^{\rm H}{\bf T}_{2}{\bm\theta}{\bm\theta}^{\rm H})\!&\overset{(a)}{\geq}\!2\Re\{{\bm\psi}_{t}^{\rm H}({\bf T}_{1}^{\rm T}\!\otimes\!{\bf T}_{2}){\bm\psi}\}\!-\!{\bm\psi}_{t}^{\rm H}({\bf T}_{1}^{\rm T}\!\otimes\!{\bf T}_{2}){\bm\psi}_{t}\notag\\
    &={\bm\theta}^{\rm H}({\bf T}_1{\bm\theta}_{t}{\bm\theta}_{t}^{\rm H}{\bf T}_2\!+\!{\bf T}_2{\bm\theta}_{t}{\bm\theta}_{t}^{\rm H}{\bf T}_1){\bm\theta}\!-\!{\bm\theta}_{t}^{\rm H}{\bf T}_1{\bm\theta}_{t}{\bm\theta}_{t}^{\rm H}{\bf T}_2{\bm\theta}_{t},
\end{align}
where ${\bm\theta}_{t}$ denotes the phase shift vector obtained at the $t$-th MM iteration and ${\bm\psi}_{t}\!=\!{\rm vec}({\bm\theta}_{t}{\bm\theta}_{t}^{\rm H})$. The inequality $(a)$ holds based on ${\bm\psi}\!=\!{\rm vec}({\bm\theta}{\bm\theta}^{\rm H})$ and the first-order Taylor expansion of the function ${\rm Tr}({\bf T}_{1}{\bm\theta}{\bm\theta}^{\rm H}{\bf T}_{2}{\bm\theta}{\bm\theta}^{\rm H})$ at the point ${\bm\psi}_{t}$. Similarly, the term ${\rm Tr}({\bf R}_{1,c}{\bm\theta}{\bm\theta}^{\rm H}{\bf R}_{2,c}{\bm\theta}{\bm\theta}^{\rm H})$ in the right-hand-side of \eqref{P231cons1} is upper-bounded by 
\begin{align}\label{UB}
    {\rm Tr}({\bf R}_{1,c}{\bm\theta}{\bm\theta}^{\rm H}{\bf R}_{2,c}{\bm\theta}{\bm\theta}^{\rm H})\!&\overset{(b)}{\leq}\!\lambda_{\max}({\bf R}_{1,{c}}^{\rm T}\!\otimes\!{\bf R}_{2,c}){\bm\psi}^{\rm H}{\bm\psi}\!+\!2\Re\{{\bm \psi}^{\rm H}({\bf R}_{1,{c}}^{\rm T}\!\otimes\!{\bf R}_{2,c}\!-\!\lambda_{\max}({\bf R}_{1,{c}}^{\rm T}\!\otimes\!{\bf R}_{2,c}){\bf I}_{M^2}){\bm \psi}_{\rm t}\}\notag\\
    &+\!{\bm\psi}_{t}^{\rm H}(\lambda_{\max}({\bf R}_{1,{c}}^{\rm T}\!\otimes\!{\bf R}_{2,c}){\bf I}_{M^2}\!-\!{ {\bf R}_{1,{c}}^{\rm T}\!\otimes\!{\bf R}_{2,c}}){\bm\psi}_{t}\notag\\
    &\overset{(c)}{=}\!{\bm\theta}^{\rm H}({\bf R}_{1,c}{\bm\theta}_t{\bm\theta}_{t}^{\rm H}{\bf R}_{2,c}\!+\!{\bf R}_{2,c}{\bm\theta}_t{\bm\theta}_{t}^{\rm H}{\bf R}_{1,c}\!-\!2\lambda_{\max}({\bf R}_{1,{c}}^{\rm T}\!\otimes\!{\bf R}_{2,c}){\bm\theta}_t{\bm\theta}_t^{\rm H}){\bm\theta}\notag\\
    &+\!2M^2\lambda_{\max}({\bf R}_{1,{c}}^{\rm T}\!\otimes\!{\bf R}_{2,c})\!-\!{\bm\psi}_t^{\rm H}{\bf R}_{1,{c}}^{\rm T}\!\otimes\!{\bf R}_{2,c}{\bm\psi}_t.
\end{align}
The inequality $(b)$ holds due to ${\bm\psi}\!=\!{\rm vec}({\bm\theta}{\bm\theta}^{\rm H})$ and the MM inequality ${\bf x}^{\rm H}{\bf X x}\!\leq\!{\bf x}^{\rm H}{\bf Y x}\!+\!2\Re\{{\bf x}^{\rm H}({\bf Y}\!-\!{\bf X}){\bf x}_{t}\}\!+\!{\bf x}_{t}^{\rm H}({\bf X}\!-\!{\bf Y}){\bf x}_{t}$ for any two Hermitian matrices ${\bf X}$ and ${\bf Y}$ satisfying ${\bf Y}\!\succeq\!{\bf X}$, and the equality $(c)$ holds due to ${\bm\psi}^{\rm H}{\bm\psi}\!=\!M^2$ and ${\bm\psi}\!=\!{\rm vec}({\bm\theta}{\bm\theta}^{\rm H})$.

Based on the approximations in \eqref{LB} and \eqref{UB}, a tractable upper-bound approximation of the original problem (P3-3) is written as 
    \begin{align}\label{P232}
        ({\text{P3-3-1}}):~  \underset{{\bm\theta}}
        {{\min}}~f_{\rm RIS}({\bm\theta}),~~
        {\rm {s.t.}}~
        {\bm\theta}^{\rm H}{\bf G}_{\rm RIS}{\bm\theta}\!+\!r_{\rm RIS}\!\leq\!0,\,
        \eqref{P231cons2},
    \end{align}
where ${\bf G}_{\rm RIS}\!=\!\gamma_{\rm T}\!\sum\limits_{c=1}^{C}({\bf R}_{1,c}{\bm\theta}_t{\bm\theta}_{t}^{\rm H}{\bf R}_{2,c}\!+\!{\bf R}_{2,c}{\bm\theta}_t{\bm\theta}_{t}^{\rm H}{\bf R}_{1,c}\!-\!2\lambda_{\max}({\bf R}_{1,{c}}^{\rm T}\!\otimes\!{\bf R}_{2,c}){\bm\theta}_t{\bm\theta}_t^{\rm H})\!-\!{\bf T}_1{\bm\theta}_{t}{\bm\theta}_{t}^{\rm H}{\bf T}_2\!-\!{\bf T}_2{\bm\theta}_{t}{\bm\theta}_{t}^{\rm H}{\bf T}_1$, and $r_{\rm RIS}\!=\!\gamma_{\rm T}\!\sum\limits_{c=1}^{C}(2M^2\lambda_{\max}({\bf R}_{1,{c}}^{\rm T}\otimes{\bf R}_{2,c})\!-\!\psi_t^{\rm H}{\bf R}_{1,{c}}^{\rm T}\!\otimes\!{\bf R}_{2,c}\psi_t)\!+\!{\bm\theta}_{t}^{\rm H}{\bf T}_1{\bm\theta}_{t}{\bm\theta}_{t}^{\rm H}{\bf T}_2{\bm\theta}_{t}\!+\!\gamma_{\rm T}\Vert{\bf u}\Vert^2\sigma_{\rm R}^2$. The challenge left for solving problem (P3-3-1) is the intractable unit-modulus constraint \eqref{P231cons2}. To tackle this challenge, we consider applying the SADMM method with faster convergence rate than the conventional ADMM method to solve problem (P3-3-1) in an iterative manner\cite{symADMM}. Specifically, by introducing an auxiliary variable ${\bm \phi}$, problem (P3-3-1) can be rewritten as
\begin{subequations}\label{P233}
    \begin{align}
        ({\text{P3-3-2}}):~  \underset{{\bm \theta},{\bm \phi}}
        {{\min}}~&f_{\rm RIS}({\bm\theta}) \\
        {\rm {s.t.}}~~
        &{\bm\theta}^{\rm H}{\bf G}_{\rm RIS}{\bm\theta}\!+\!r_{\rm RIS}\!\leq\!0,~{\bm \phi}\!=\!{\bm\theta},~~\vert[{\bm \phi}]_m\vert\!=\!1,~\forall m\!\in\!\mathcal{M}.\label{P233cons3}
    \end{align}
\end{subequations}

Under the SADMM framework, the scaled augmented Lagrangian function of problem (P3-3-2) is given by
\begin{align}\label{Lag}
    \mathcal{L}({\bm\phi},{\bm \theta},{\bm\rho})\!=\!f_{\rm RIS}({\bm\theta})\!+\!\frac{\mu_{\rm pen}}{2}\big\Vert{\bm\theta}\!-\!{\bm\phi}\!+\!\frac{\bm\rho}{\mu_{\rm pen}}\big\Vert^2,
\end{align}
where ${\bm \rho}$ and $\mu_{\rm pen}$ are the dual variable and the penalty factor, respectively. Then, the SADMM update rules  are presented as follows:
\begin{subequations}\label{admmite}
\begin{align}
    {\bm\phi}^{(j)}\!&:=\!\underset{{\bm\phi}}{\arg\min}\mathcal{L}({\bm\phi},{\bm \theta}^{(j-1)},{\bm\rho}^{(j-1)}),\label{admm1}\\
    {\bm\rho}^{(j-\frac{1}{2})}\!&:=\!{\bm\rho}^{(j-1)}\!+\!r_1\mu_{\rm pen}({\bm \theta}^{(j-1)}\!-\!{\bm\phi}^{(j)}),\\
    {\bm\theta}^{(j)}\!&:=\!\underset{{\bm\theta}}{\arg\min}\mathcal{L}({\bm\phi}^{(j)},{\bm \theta},{\bm\rho}^{(j-\frac{1}{2})}),\label{admm2}\\{\bm\rho}^{(j)}\!&:=\!{\bm\rho}^{(j-\frac{1}{2})}\!+\!r_2\mu_{\rm pen}({\bm \theta}^{(j)}\!-\!{\bm\phi}^{(j)}),
\end{align}
\end{subequations}
where $j$ is the SADMM iteration index, $r_1$ and $r_2$ are independent stepsizes restricted by $\mathcal{D}\!=\!\{(r_1,r_2)|r_1\!\in\!(-1,1),r_2\!\in\!(0,\frac{1\!+\!\sqrt{5}}{2}),r_1\!+\!r_2\!>\!0,\vert r_1\vert\!<\!1\!+\!r_2\!-\!r_2^2\}$\cite{symADMM}.

On the one hand, by omitting the constant irrelevant to ${\bm\phi}$, we can express the subproblem \eqref{admm1} w.r.t. ${\bm\phi}$ as
\begin{align}\label{proadmm1}
    \underset{{\bm \phi}}
    {{\min}}~\big\Vert{\bm\theta}^{(j-1)}\!-\!{\bm\phi}\!+\!\frac{{\bm\rho}^{(j-1)}}{\mu_{\rm pen}}\big\Vert^2, ~~~{\rm {s.t.}}~\vert[{\bm\phi}]_m\vert\!=\!1,\forall m\!\in\!\mathcal{M},
\end{align}
whose optimal solution is readily derived as
\begin{align}
    {\bm\phi}^{(j)}\!=\!e^{j\angle({\bm\theta}^{(j-1)}+\frac{{\bm\rho}^{(j-1)}}{\mu_{\rm pen}})}.
\end{align}

On the other hand, the subproblem \eqref{admm2} w.r.t. ${\bm\theta}$ can be reformulated as a QCQP-1 problem, which is given by 
    \begin{align}\label{P245}
        \underset{{\bm \theta}}
        {{\min}}~f_{\rm RIS}({\bm\theta})\!+\!\frac{\mu_{\rm pen}}{2}\big\Vert{\bm\theta}\!-\!{\bm\phi}^{(j)}\!+\!\frac{{\bm\rho}^{(j-\frac{1}{2})}}{\mu_{\rm pen}}\big\Vert^2, ~~~{\rm {s.t.}}~{\bm\theta}^{\rm H}{\bf G}_{\rm RIS}{\bm\theta}\!+\!r_{\rm RIS}\!\leq\!0.
    \end{align}
Naturally, the strong duality holds for problem \eqref{P245}. Therefore, similar to solving problem (P3-1-1), the optimal ${\bm\theta}^{(j)}$ can be obtained as
\begin{align}
    {\bm\theta}^{(j)}\!&=\!\big(\sum\limits_{k=1}^{K}\!\vert\alpha_k\vert^2\sum\limits_{k'=1}^{K}\!({\bf G}_{{\rm BU},k}^*{\bf w}_{k'}^*{\bf w}_{k'}^{\rm T}{\bf G}_{{\rm BU},k}^{\rm T})\!+\!\frac{\mu_{\rm pen}}{2}{\bf I}_M\!+\!\mu_2{\bf G}_{\rm RIS}\big)^{\dagger}\notag\\
    &\times\!\big(\frac{\mu_{\rm pen}}{2}{\bm\phi}^{(j)}\!-\!\frac{{\bm\rho}^{(j-\frac{1}{2})}}{2}\!+\sum\limits_{k=1}^{K}\!\sqrt{1\!+\!w_k}\alpha_k{\bf G}_{{\rm BU},k}^*{\bf w}_k^*\big),
\end{align}
where $\mu_2$ is the optimal dual variable associated with the constraint in \eqref{P245} and further determined by the subgradient method\cite{SubMethod}.  Overall, based on the SADMM iterations shown in \eqref{admmite}, we ultimately obtain the locally optimal ${\bm\theta}^{\rm opt}$ to the original subproblem (P3-3).

{\noindent\textit{5) Optimization of $\Delta f_{ n_{\rm t}}$}}

In this subsection, we aim to optimize the frequency offsets $\{\Delta f_{ n_{\rm t}}\}_{n_{\rm t}\in\mathcal{N}_{\rm t}}$ with the other variables being fixed. Since $\{\Delta f_{ n_{\rm t}}\}_{n_{\rm t}\in\mathcal{N}_{\rm t}}$ are implicitly involved in both the objective function and SCNR constraint of problem (P3), we introduce the vectors ${\bf f}_{\rm BR}\!\in\!\mathbb{C}^{N_{\rm t}\times 1}$, ${\bf f}_{{\rm BU},k}\!\in\!\mathbb{C}^{N_{\rm t}\times 1}$, ${\bf f}_{\rm BT}\!\in\!\mathbb{C}^{N_{\rm t}\times 1}$, and ${\bf f}_{{\rm BC},c}\!\in\!\mathbb{C}^{N_{\rm t}\times 1}$ to re-express problem (P3), which are respectively given by
\begin{align}\label{fvectors}
    [{\bf f}_{\rm BR}]_{n_{\rm t}}\!&=\!e^{j\frac{2\pi}{\rm c}D_{\rm BR}\Delta f_{n_{\rm t}}},\,[{\bf f}_{{\rm BU},k}]_{n_{\rm t}}\!=\!e^{j\frac{2\pi}{\rm c}(D_{\rm BR}+D_{{\rm RU},k})\Delta f_{n_{\rm t}}},\notag\\
    [{\bf f}_{{\rm BT}}]_{n_{\rm t}}\!&=\!e^{j\frac{2\pi}{\rm c}(2D_{\rm BR}+2D_{\rm RT})\Delta f_{n_{\rm t}}},\,[{\bf f}_{{\rm BC},c}]_{n_{\rm t}}\!=\!e^{j\frac{2\pi}{\rm c}(2D_{\rm BR}+2D_{{\rm RC},c})\Delta f_{n_{\rm t}}},\,\forall n_{\rm t}\!\in\!N_{\rm t}.
\end{align}

Based on \eqref{fvectors}, the subproblem w.r.t. the frequency offsets $\{\Delta f_{ n_{\rm t}}\}_{n_{\rm t}\in\mathcal{N}_{\rm t}}$ is then expressed as
\begin{subequations}\label{P25}
    \begin{align}
        \!\!({\text{P3-4}}):~  \underset{\{{\Delta f_{ n_{\rm t}}}\}}
        {{\min}}~&\sum\limits_{k=1}^{K}\!\Big(\vert\alpha_k\vert^2\!\sum\limits_{k'=1}^{K}\!\vert{{\bf w}_{k'}^{\rm H}}({{\bf g}_{{\rm BU},k}^{\rm LoS}\!\circ\!{\bf f}_{{\rm BU}_k}}\!+\!{{\bf g}_{{\rm BU},k}^{\rm NLoS}\!\circ\!{\bf f}_{{\rm BR}}})\vert^2
        \notag\\
        &-\!2\sqrt{1\!+\!w_k}\Re\{\alpha_k{\bf w}_k^{\rm H}({{\bf g}_{{\rm BU},k}^{\rm LoS}\!\circ\!{\bf f}_{{\rm BU},k}}\!+\!{{\bf g}_{{\rm BU},k}^{\rm NLoS}\!\circ\!{\bf f}_{{\rm BR}}})\}\Big)\label{P25obj}\\
        {\rm {s.t.}}~~&\gamma_{\rm T}\sum\limits_{c=1}^{C}\sum\limits_{k=1}^{K}\big\vert{\bf w}_{k}^{\rm H}({\bf g}_{{\rm BC},c}\!\circ\!{\bf f}_{{\rm BC},c})\big\vert^2\!-\!\sum\limits_{k=1}^{K}\big\vert{\bf w}_{k}^{\rm H}({\bf g}_{\rm BT}\!\circ\!{\bf f}_{\rm BT})\big\vert^2\!+\!\gamma_{\rm T}\Vert{\bf u}\Vert^2\sigma_{\rm R}^2\!\leq\! 0,\label{P25cons}\\
        &\eqref{P1cons4},
    \end{align}
\end{subequations}
where the vectors $\{{\bf g}_{{\rm BU},k}^{\rm LoS},{\bf g}_{{\rm BU},k}^{\rm NLoS},{\bf g}_{{\rm BC},c},{\bf g}_{{\rm BT}}\}$ are given by
\begin{align}
    \!\!\!\!\!({\bf g}_{{\rm BU},k}^{\rm LoS})^{\rm H}\!&=\!\sqrt{\frac{\kappa\beta_{{\rm RU},k}}{\kappa+1}}({\bf h}_{k,0}^{\rm LoS})^{\rm H}{\bm\Theta}{\bf G}_{\rm BR},~({\bf g}_{{\rm BU},k}^{\rm NLoS})^{\rm H}\!=\!\sqrt{\frac{\beta_{{\rm RU},k}}{\kappa+1}}{\bm\theta}^{\rm T}({\bf G}_{{\rm RU},k}^{\rm NLoS}\!\circ\!{\bf G}_{\rm BR}),\notag\\
    {\bf g}_{{\rm BC},c}^{\rm H}\!&=\!\sqrt{\beta_c}{\bf u}^{\rm H}{\bf H}_{\rm RB,0}{\bm\Theta}{\bf a}_{c,0}^{\rm r}({\bf a}_{c,0}^{\rm t})^{\rm H}{\bm\Theta}{\bf G}_{\rm BR},~{\bf g}_{{\rm BT}}^{\rm H}\!=\!\sqrt{\beta_{\rm T}}{\bf u}^{\rm H}{\bf H}_{\rm RB,0}{\bm\Theta}{\bf a}_{\rm tar,0}^{\rm r}({\bf a}_{\rm tar,0}^{\rm t})^{\rm H}{\bm\Theta}{\bf G}_{\rm BR}.
\end{align}
Note that $\{{\bf h}_{k,0}^{\rm LoS},{\bf G}_{\rm BR}\}$ are obtained by substituting $f_{n_{\rm t}}\!=\!f_{\rm ref},\forall n_{\rm t}\!\in\!\mathcal{N}_{\rm t}$ into $\{{\bf h}_{k,n_{\rm t}}^{\rm LoS},{\bf H}_{\rm BR}\}$, and ${\bf G}_{{\rm RU},k}^{\rm NLoS}\!=\![{\bf h}_{k,1}^{\rm NLoS},\cdots,{\bf h}_{k,N_{\rm t}}^{\rm NLoS}]$. Although $\{\Delta f_{ n_{\rm t}}\}_{n_{\rm t}\in\mathcal{N}_{\rm t}}$ have been shown in a more explicit manner in problem (P3-4), they are still embedded in the exponential parts of $\{{\bf f}_{\rm BR},{\bf f}_{{\rm BU},k},{\bf f}_{\rm BT},{\bf f}_{{\rm BC},c}\}$ and tightly coupled with each other. Hereafter, we aim to tackle these exponential terms and then optimize each $\Delta f_{n_{\rm t}}$ in an element-wise manner. Specifically, by leveraging the following equalities
\begin{align}
    \vert{\bf x}^{\rm H}{\bf y}\vert^2\!&=\!\sum\limits_{p=1}^{N}\sum\limits_{q=1}^{N}\big|[{\bf x}]_{p}\big|\big|[{\bf y}]_{p}\big|\big|[{\bf x}]_{q}\big|\big|[{\bf y}]_{q}\big|\cos\big(\angle[{\bf x}]_{q}\!-\!\angle[{\bf x}]_{p}\!+\!\angle[{\bf y}]_{p}\!-\!\angle[{\bf y}]_{q}\big),\\
    \Re\{{\bf x}^{\rm H}{\bf y}\}\!&=\!\sum\limits_{p=1}^{N}\big|[{\bf x}]_{p}\big|\big|[{\bf y}]_{p}\big|\cos\big(\angle[{\bf y}]_{p}\!-\!\angle[{\bf x}]_{p}\big),
\end{align}
for any two vectors ${\bf x},{\bf y}\!\in\!\mathbb{C}^{N\times 1}$, problem (P3-4) can be rewritten in terms of each $\Delta f_{n_{\rm t}}$ as
\begin{subequations}\label{P251}
    \begin{align}
        \!\!({\text{P3-4-1}}):~  \underset{{\Delta f_{ n_{\rm t}}}}
        {{\min}}~&\sum\limits_{k=1}^{K}\!\Big(\sum\limits_{k'=1}^{K}\sum\limits_{i=1}^{3}g_{i,k,{k'}}(\Delta f_{n_{\rm t}})\!+\!\sum\limits_{j=1}^{2}g_{j,k}(\Delta f_{n_{\rm t}})\Big)\label{P251obj} \\
        {\rm {s.t.}}~~&\gamma_{\rm T}\sum\limits_{c=1}^{C}\sum\limits_{k=1}^{K} g_{c,{k}}(\Delta f_{n_{\rm t}})\!+\!\sum\limits_{k=1}^{K}g_{{\rm T},{k}}(\Delta f_{n_{\rm t}})\!+\!\gamma_{\rm T}^{\rm cons}\!\leq\!0,~\eqref{P1cons4}.\label{P251cons}
    \end{align}
\end{subequations}
In \eqref{P251obj}, the functions $g_{1,k,{k'}}(\Delta f_{n_{\rm t}})$, $g_{2,k,{k'}}(\Delta f_{n_{\rm t}})$ and $g_{3,k,{k'}}(\Delta f_{n_{\rm t}})$ represent reformulations of $\vert\alpha_k\vert^2\vert{{\bf w}_{k'}^{\rm H}}({{\bf g}_{{\rm BU},k}^{\rm LoS}\!\!\circ\!{\bf f}_{{\rm BU},k}})\vert^2$, $\vert\alpha_k\vert^2\vert{{\bf w}_{k'}^{\rm H}}({{\bf g}_{{\rm BU},k}^{\rm NLoS}\!\!\circ\!{\bf f}_{{\rm BR}}})\vert^2$ and $\vert\alpha_k\vert^2\Re\{{\bf w}_{k'}^{\rm H}({{\bf g}_{{\rm BU},k}^{\rm LoS}\!\!\circ\!{\bf f}_{{\rm BU},k}})({{\bf g}_{{\rm BU},k}^{\rm NLoS}\!\!\circ\!{\bf f}_{{\rm BR}}})^{\rm H}{\bf w}_{k'}\}$, respectively, with all  constants irrelevant to $\Delta f_{n_{\rm t}}$ omitted. $g_{1,k}(\Delta f_{n_{\rm t}})$ and $g_{2,k}(\Delta f_{n_{\rm t}})$ are similarly obtained from $-\sqrt{1\!+\!w_k}\Re\{\alpha_k{\bf w}_k^{\rm H}({{\bf g}_{{\rm BU},k}^{\rm LoS}\!\circ\!{\bf f}_{{\rm BU},k}})\}$ and $-\sqrt{1\!+\!w_k}\Re\{\alpha_k{\bf w}_k^{\rm H}({{\bf g}_{{\rm BU},k}^{\rm NLoS}\!\circ\!{\bf f}_{{\rm BR}}})\}$, respectively. Moreover, in the SCNR constraint in \eqref{P251cons}, $g_{c,{k}}(\Delta f_{n_{\rm t}})$ and $g_{{\rm T},{k}}(\Delta f_{n_{\rm t}})$ denote the $\Delta f_{n_{\rm t}}$-related  reformulations of $\big\vert{\bf w}_{k}^{\rm H}({\bf g}_{{\rm BC},c}\!\circ\!{\bf f}_{{\rm BC},c})\big\vert^2$ and $-\big\vert{\bf w}_{k}^{\rm H}({\bf g}_{\rm BT}\!\circ\!{\bf f}_{\rm BT})\big\vert^2$, respectively, and $\gamma_{\rm T}^{\rm cons}$ represents the remaining constant irrelevant to $\Delta f_{n_{\rm t}}$.  It is worth noting that all of the above functions $\{g_{i,k,{k'}}(\Delta f_{n_{\rm t}}),g_{j,k}(\Delta f_{n_{\rm t}}),g_{c,{k}}(\Delta f_{n_{\rm t}}),g_{{\rm T},{k}}(\Delta f_{n_{\rm t}}),i\!=\!1,2,3,j\!=\!1,2\}$ take the common form $g_0(\Delta f_{n_{\rm t}})\!=\!\sum\limits_{l}\tilde{g}_{l}(\Delta f_{n_{\rm t}})$ with $\tilde{g}_{l}(\Delta f_{n_{\rm t}})\!\triangleq\!\xi_{l}{\cos}(\eta_{l}\Delta f_{n_{\rm t}}+\rho_{l})$ and $l$ being the summation index, where the values of $\{\xi_{l},\eta_{l},\rho_{l}\}$ associated with different functions are shown in our {\textbf{Supplementary Material \cite{todo}}}.

The primary challenge for solving problem (P3-4-1) arises from handling the  non-convex cosine-form function $\tilde{g}_{l}(\Delta f_{n_{\rm t}})$. To resolve this issue, we consider leveraging the SCA methodology to construct a  tractable surrogate function of $\tilde{g}_{l}(\Delta f_{n_{\rm t}})$. Specifically, by referring to \cite{FDA_LPF1} and leveraging the MM philosophy, a convex quadratic upper bound of $\tilde{g}_{l}(\Delta f_{n_{\rm t}})$ at the $j$-th iteration point $\Delta f_{n_{\rm t}}^{(j)}$ is given by $\hat{g}_{l}(\Delta f_{n_{\rm t}})\!=\!\hat{\xi}_{l}^{(j)}(\Delta f_{n_{\rm t}}\!-\!\hat{\eta}_{l}^{(j)})^2\!+\!\hat{\rho}_{l}^{(j)}$, where the values of $\{\hat{\xi}_{l}^{(j)},\hat{\eta}_{l}^{(j)},\hat{\rho}_{l}^{(j)}\}$ are shown in the equations (24) and (25) in \cite{FDA_LPF1} corresponding to the two cases of $\tilde{g}_{l}'(\Delta f_{n_{\rm t}}^{(j)})\!\neq\!0$ and $\tilde{g}_{l}'(\Delta f_{n_{\rm t}}^{(j)})\!=\!0$, respectively. As such, a tractable upper bound approximation of problem (P3-4-1) is given by
\begin{subequations}\label{P242}
    \begin{align}
        \!\!({\text{P3-4-2}}):~  \underset{{\Delta f_{ n_{\rm t}}}}
        {{\min}}~&\hat{f}_{\rm obj}(\Delta f_{n_{\rm t}})\!\triangleq\!\sum\limits_{k=1}^{K}\!\Big(\sum\limits_{k'=1}^{K}\sum\limits_{i=1}^{3}\hat{g}_{i,k,{k'}}(\Delta f_{n_{\rm t}})\!+\!\sum\limits_{j=1}^{2}\hat{g}_{j,k}(\Delta f_{n_{\rm t}})\Big)\label{P252obj} \\
        {\rm {s.t.}}~~&\gamma_{\rm T}\sum\limits_{c=1}^{C}\sum\limits_{k=1}^{K}   \hat{g}_{c,{k}}(\Delta f_{n_{\rm t}})\!+\!\sum\limits_{k=1}^{K}\hat{g}_{{\rm T},{k}}(\Delta f_{n_{\rm t}})\!+\!\gamma_{\rm T}^{\rm cons}\!\leq\!0,~\eqref{P1cons4},\label{P252cons}
    \end{align}
\end{subequations}
where $\hat{g}_{i,k,{k'}}(\Delta f_{n_{\rm t}}),\hat{g}_{j,k}(\Delta f_{n_{\rm t}}),\hat{g}_{c,{k}}(\Delta f_{n_{\rm t}}),\hat{g}_{{\rm T},{k}}(\Delta f_{n_{\rm t}}),i\!=\!1,2,3,j\!=\!1,2$ denote the convex quadratic upper bound approximations of $g_{i,k,{k'}}(\Delta f_{n_{\rm t}}),g_{j,k}(\Delta f_{n_{\rm t}}),g_{c,{k}}(\Delta f_{n_{\rm t}}),g_{{\rm T},{k}}(\Delta f_{n_{\rm t}}),i\!=\!1,2,3,j\!=\!1,2$ at the $j$-th iteration point $\Delta f_{n_{\rm t}}^{(j)}$, respectively, with their specific mathematical expressions shown in our {\textbf{Supplementary Material \cite{todo}}}. Note that both the objective function \eqref{P252obj} and the SCNR constraint in \eqref{P252cons} are the summations of a series of quadratic functions w.r.t. $\Delta f_{n_{\rm t}}$, and thus can be equivalently rewritten as $\hat{f}_{\rm obj}(\Delta f_{n_{\rm t}})\!=\!\hat{d}_1\Delta f_{n_{\rm t}}^2\!+\!\hat{d}_2\Delta f_{n_{\rm t}}\!+\!\hat{d}_3$ and $\tilde{d}_1\Delta f_{n_{\rm t}}^2\!+\!\tilde{d}_2\Delta f_{n_{\rm t}}\!+\!\tilde{d}_3\!\leq\!0$, respectively, with the definitions of  $\{\hat{d}_1,\hat{d}_2,\hat{d}_3\}$ and $\{\tilde{d}_1,\tilde{d}_2,\tilde{d}_3\}$ also presented in our {\textbf{Supplementary Material \cite{todo}}}. Consequently, the optimal $\Delta f_{n_{\rm t}}^{\rm opt}$ is determined as
\begin{equation}
    \Delta f_{n_{\rm t}}^{\rm opt}\!=\! \begin{cases} {\max}\{\frac{-\tilde{d}_2-\sqrt{\tilde{d}_2^2-4\tilde{d}_1\tilde{d}_2}}{2\tilde{d}_1},0\}, & \!\!\!\! -\frac{\hat{d}_2}{2\hat{d}_1}\!<\! {\max}\{\frac{\tilde{d}_1\tilde{d}_2-\sqrt{-\tilde{d}_1\tilde{d}_3}}{\tilde{d}_1},0\} \\ -\frac{\hat{d}_2}{2\hat{d}_1}, &\!\!\!\!\!\!\!\!\!\! {\max}\{\frac{-\tilde{d}_2-\sqrt{\tilde{d}_2^2-4\tilde{d}_1\tilde{d}_2}}{2\tilde{d}_1},0\} \!\leq\! -\frac{\hat{d}_2}{2\hat{d}_1} \!<\! {\min}\{\frac{-\tilde{d}_2+\sqrt{\tilde{d}_2^2-4\tilde{d}_1\tilde{d}_2}}{2\tilde{d}_1},f_{\max}\} \\ {\min}\{\frac{-\tilde{d}_2+\sqrt{\tilde{d}_2^2-4\tilde{d}_1\tilde{d}_2}}{2\tilde{d}_1},f_{\max}\}, &\!\!\!\! \text { otherwise }\end{cases}\!.
\end{equation}

\section{Simulation Results}

In this section, we present numerical results to demonstrate the effectiveness of the proposed SADMM-SCA-based AO algorithm for the considered FDA-RIS-aided ISAC system. Under a three-dimensional Cartesian coordinate system, where the BS, the RIS and the target are located at (0,0,5)m, (0,10,5)m and (5,15,2)m, respectively. We consider that $K\!=\!4$ users are randomly distributed in a circle with the center (70,50,0)m and radius of $r_{\rm U}\!=\!12.5$m, and $C\!=\!3$ clutters are randomly distributed in a circle with the target as the center and radius of $r_{\rm c}\!=\!15$m. The numbers of BS transmit and receive antennas are set as $N_{\rm t}\!=\!8$ and $N_{\rm r}\!=\!4$, respectively, and the number of RIS reflecting elements is set as $M\!=\!64$. The large-scale path loss model for all channels is assumed to be $\beta_{\chi }\!=\!\beta_{0}(\frac{D_{\chi}}{D_0})^{-\alpha_{\chi}},\chi\!\in\!\{{\rm RU}\!,\!k,\,{\rm BR}\}$, where $\beta_0\!=\!-30$dB denotes the signal power attenuation at the reference distance $D_0\!=\!1$m and $\alpha_{\chi}$ denotes the path loss exponent. In terms of the BS-RIS link and each RIS-user $k$ link, we set $a_{\rm BR}\!=\!a_{{\rm RU},k}\!=\!2.3$. The small-scale Rician fading  is considered for all RIS-user $k$ channels with the Rician factor $\kappa\!=\!4$. In  addition,  for  the radar subsystem,  the overall path gains  associated with the target and clutter $c$ are  chosen as  $\beta_{\rm T}\!=\!\beta_{c}\!=\!10^{-6}$. Unless otherwise stated, the other system parameters are summarized in Table~\ref{table:by:fig}.

In order to demonstrate the effectiveness of our proposed SADMM-SCA-based AO algorithm (referred to as \textbf{Prop-AO}), we consider the following baselines: 1) \textbf{Comm-Centric}: The optimization variables $\{{\bf w}_k,{\bm\Theta},{f_{n_{\rm t}}}\}$ are jointly optimized to maximize the sum rate without considering the radar sensing performance; 2) \textbf{Radar-Centric}: The optimization variables $\{{\bf w}_k,{\bm\Theta},{f_{n_{\rm t}}},{\bf u}\}$ are jointly optimized to maximize SCNR without considering the communication performance; 3) \textbf{PA}: In this scheme, the BS is equipped with the conventional phased array of the same size as the FDA in \textbf{Prop-AO}; 4) \textbf{SDR-RIS}: The optimal RIS phase shift matrix ${\bm\Theta}$ is obtained by jointly leveraging the SDR and Gaussian Randomization methods\cite{sim_RIS_SDR}; 5) \textbf{Fix-FDA}: In this scheme, the fixed uniform frequency offset, i.e., $f_{n_{\rm t}}\!=\!f_{\rm ref}\!+\!(n_{\rm t}\!-\!1)\Delta f$, is adopted for the FDA\cite{Samefre}.

\begin{table}[t]
	\centering
	\vspace{-12mm}
	\caption{Basic system parameters}\label{table:by:fig} 
	\vspace{-1.5mm}
  \resizebox{0.56\textwidth}{!}{
  \begin{tabular}{|c|c|} \hline 
  Parameter & Value \\ \hline
  BS transmit power $P_{\rm B}$ & 35dBm \\ \hline   	 				
   Thermal noise power at users/ BS $\sigma_{k}^2/\sigma_{\rm R}^2$ & -80dBm/ -70dBm  \\ \hline
    Reference carrier frequency $f_{\rm ref}$ & $10$GHz \\ \hline
    Maximum frequency offset $f_{\rm max}$ & $8$MHz \\ \hline
    SCNR threshold $\gamma_{\rm T}$ & 20dB \\ \hline
  \end{tabular} }
  \vspace{-6mm}
\end{table}

\begin{figure*}[t]\vspace{-8mm}
	\centering
	\subfigcapskip=-10pt
	\subfigure[(a)]
	{ \includegraphics[width=0.46\textwidth]{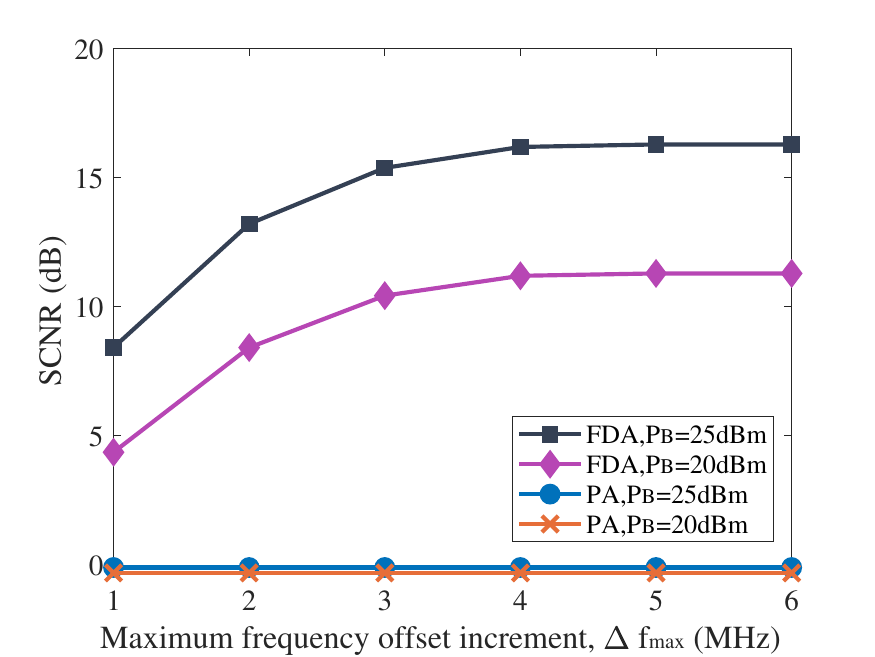}
    \label{Sim_SUST}
	}
	\subfigure[(b)]{
		\includegraphics[width=0.46\textwidth]{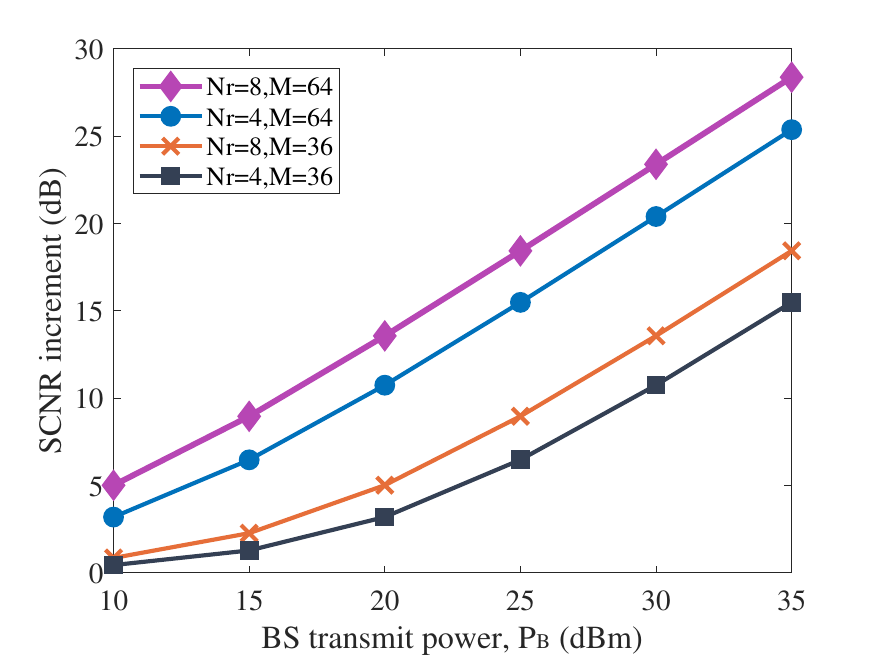}
        \label{Sim_deltaSUST}
	}
	\vspace{-2mm}
	\caption{(a) SCNR of the SUST system versus the maximum allowable frequency offset increment $\Delta f_{\max}$. (b) SCNR increment versus the BS transmit power $P_{\rm B}$ with $\Delta f_{\max}\!=\!3$MHz.}
	\vspace{0mm}
\end{figure*}

For the SUST system, Fig.~\ref{Sim_SUST} shows the achievable SCNRs of the FDA-based and PA-based schemes versus the maximum allowable frequency offset increment $\Delta f_{\max}$ under different BS transmit power budgets, i.e., $P_{\rm B}\!=\!25$dBm and $P_{\rm B}\!=\!20$dBm, where the distance between the target and the clutter is chosen as $\Delta D\!=\!4$m. It is clearly observed that for each $P_{\rm B}$, the achievable SCNR in the FDA-based scheme is significantly higher than that in the PA-based scheme. Furthermore, the SCNR achieved by the FDA-based scheme monotonically increases with $\Delta f_{\max}$ in the low $\Delta f_{\max}$ region, and then becomes saturated in the  high $\Delta f_{\max}$ region, which is well consistent with the derived optimal frequency offset increment in Proposition~\ref{prop-optwu}. For the PA, the obtained SCNR is independent of $\Delta f_{\max}$. Fig.~\ref{Sim_deltaSUST} further plots the SCNR increment of FDA versus the BS transmit power $P_{\rm B}$ under different numbers of BS receive antennas $N_{\rm r}$ and RIS reflecting elements $M$. It is shown that the SCNR increment under each pair $(N_{\rm r},M)$ monotonically increases with $P_{\rm B}$. Moreover, increasing $N_{\rm r}$ also enables the SCNR increment, which is consistent with the obtained insights in Section~\ref{FDAPA}-B.  These results in Fig.~\ref{Sim_SUST} and Fig.~\ref{Sim_deltaSUST} demonstrate the advantage of FDA in suppressing clutter echos as compared to the conventional PA, which can be attributed to its additional spatial resolution in the distance domain.

\begin{figure*}[t]\vspace{-5.5mm}
	\centering
	\subfigcapskip=-10pt
	\subfigure[(a)]
	{ \includegraphics[width=0.46\textwidth]{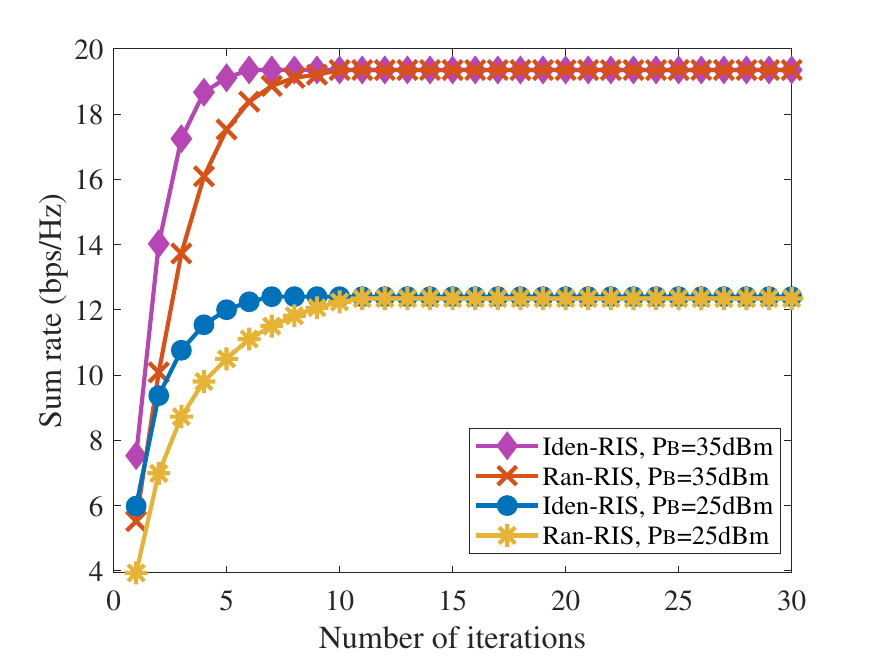}
    \label{Sim_Conver}
	}
	\subfigure[(b)]{
		\includegraphics[width=0.46\textwidth]{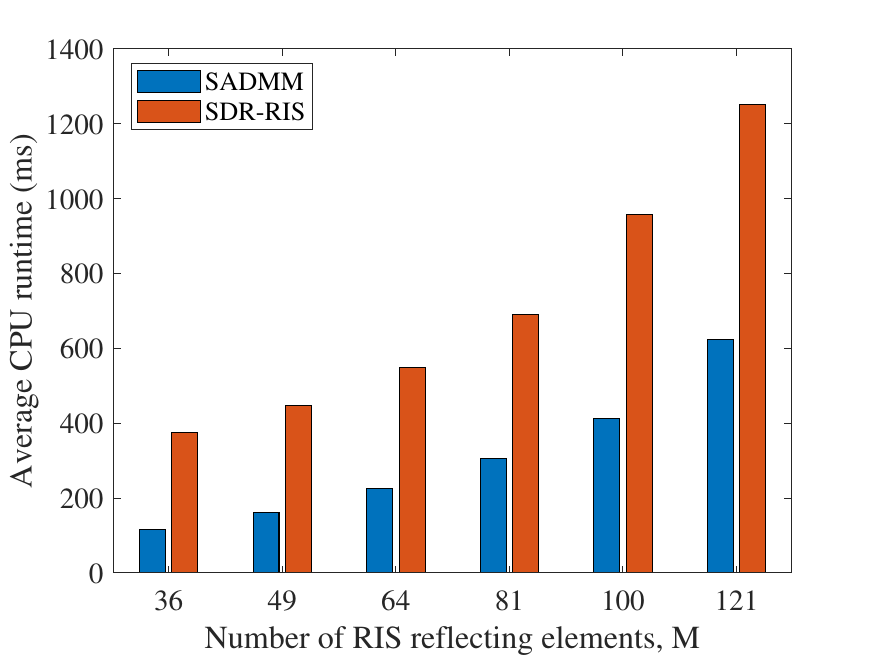}
        \label{Sim_timing}
	}
	\vspace{-2mm}
	\caption{(a) Convergence behaviors of respective algorithms. (b) Average CPU runtime versus the number of RIS reflecting elements $M$.}
	\vspace{-7mm}
\end{figure*}

Fig.~\ref{Sim_Conver} shows convergence behaviors of \textbf{Prop-AO} under different BS transmit power budgets, i.e., $P_{\rm B}\!=\!25$dBm and $P_{\rm B}\!=\!35$dBm, where two different initialization schemes, i.e., \textbf{Iden-RIS} and \textbf{Ran-RIS}, are considered with the RIS phase shift matrix ${\bm\Theta}$  initialized as an identity matrix and a random matrix, respectively. It is shown that for each $P_{\rm B}$, although \textbf{Ran-RIS} converges slower than \textbf{Iden-RIS}, both the two schemes are able to converge to the same sum rate within 15 iterations.  Moreover, Fig.~\ref{Sim_timing} shows the average CPU runtime of  the proposed SADMM algorithm and  \textbf{SDR-RIS}  for solving problem (P3-3-2) versus the number of RIS reflecting elements $M$. It can be clearly observed that although the time consumption for both schemes increases with $M$, the time consumption for the proposed SADMM algorithm is noticeably lower than that of \textbf{SDR-RIS}, thereby demonstrating its high efficiency.

\begin{figure} 
    \vspace{-0mm}
    \begin{minipage}[t]{0.5\linewidth} 
    \centering 
    \includegraphics[width=0.9\textwidth]{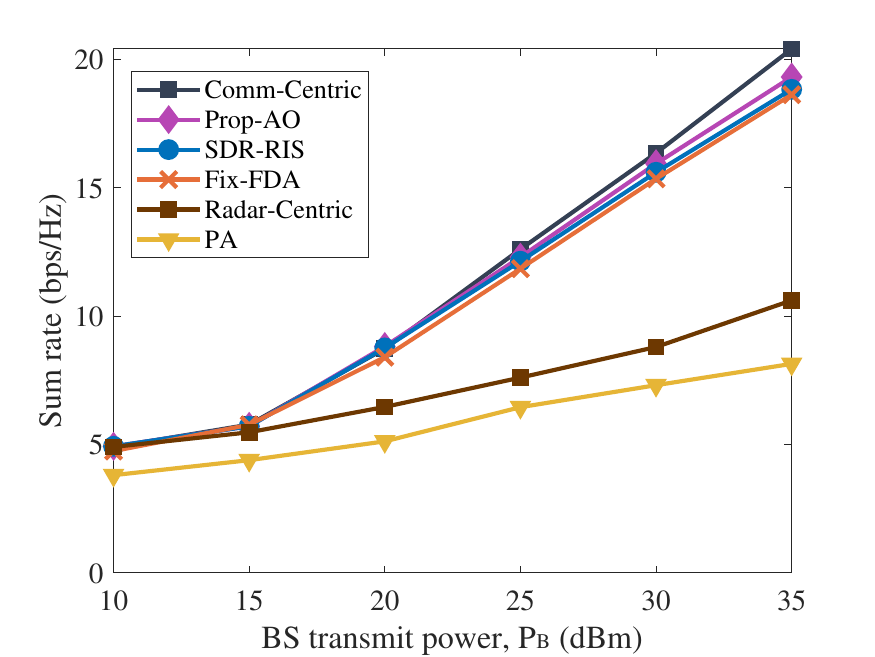} 
    \vspace{-5mm}
    \caption{Sum rate  versus the BS transmit power $P_{\rm B}$.} 
    \label{Sim_PB} 
    \end{minipage}%
    \begin{minipage}[t]{0.5\linewidth} 
    \centering 
    \includegraphics[width=0.9\textwidth]{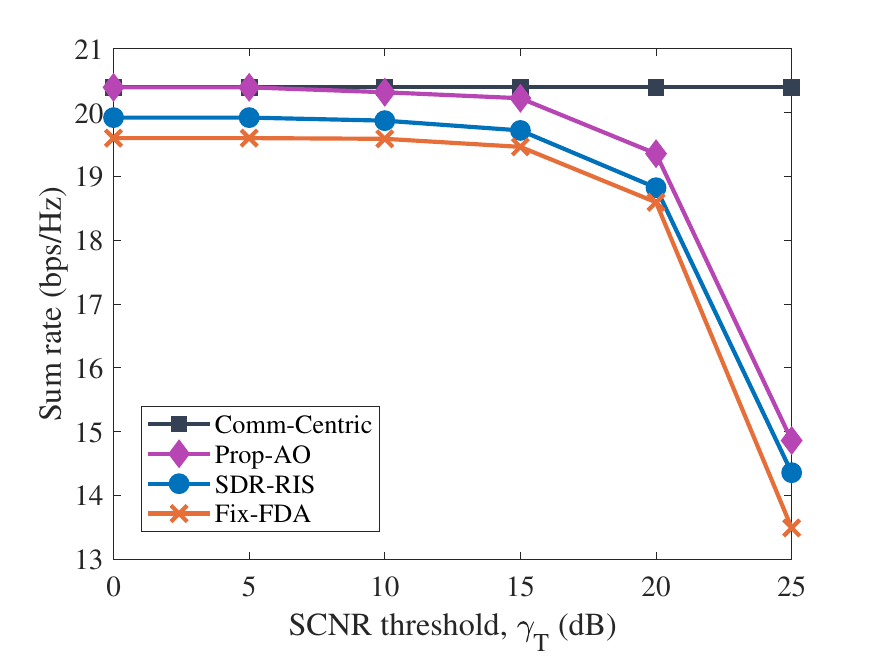} 
    \vspace{-5mm}
    \caption{Sum rate  versus the SCNR threshold $\gamma_{\rm T}$.} 
    \label{Sim_tradeoff} 
    \end{minipage} 
    \vspace{-3mm}
    \end{figure}
    
    \begin{figure}[t]
        \vspace{0mm}
        \centering
        \subfigcapskip=-10pt
        { \includegraphics[width=0.46\textwidth]{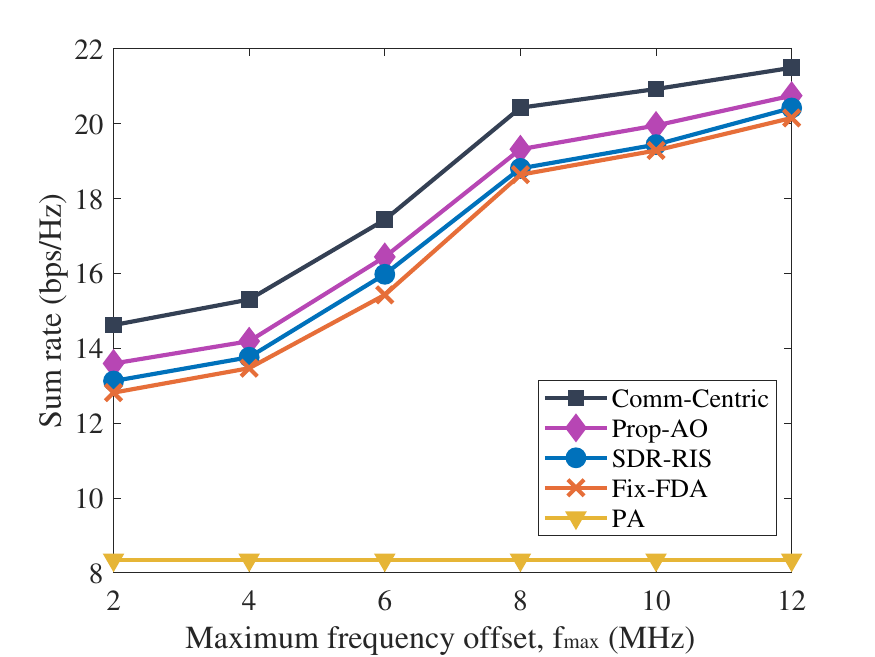}
        }
        \vspace{-5mm}
        \caption{Sum rate  versus the maximum allowable frequency offset $f_{\max}$.}\label{Sim_fre}
        \vspace{-8mm}
    \end{figure}

In Fig.~\ref{Sim_PB}, we show the achievable sum rates of all studied schemes versus the BS transmit power $P_{\rm B}$. For each scheme, the achievable sum rate monotonically increases with $P_{\rm B}$. \textbf{Prop-AO} achieves the best communication performance among all ISAC schemes but naturally performs worse than \textbf{Comm-Centric}, since it is not designed dedicatedly for enhancing communication performance. Furthermore, the performance loss of \textbf{Fix-FDA} can be attributed to the limited DoFs in the frequency offsets optimization, and the inferior performance of \textbf{Radar-Centric} is due to the fact that it is dedicated to improving the target detection  performance. Moreover, we observe that the performance of \textbf{PA} is considerably worse than the others, since the PA provides only angle-dependent beampattern, which cannot effectively suppress clutter interference, especially when the target and the clutters are in close spatial proximity.

Fig.~\ref{Sim_tradeoff} illustrates the achievable sum rates of respective schemes versus the SCNR threshold $\gamma_{\rm T}$. Since \textbf{Comm-Centric} is dedicated to enhancing communication performance, it is  not influenced by the growth of $\gamma_{\rm T}$ and thus serves as the upper bound. The achievable sum rate of \textbf{Prop-AO} is the same as that of \textbf{Comm-Centric} in the low SCNR region, whereas gradually decreases with the increase of $\gamma_{\rm T}$. This is because the SCNR constraint in problem (P3) will become tight for a sufficiently large $\gamma_{\rm T}$. In this context, with the rise of $\gamma_{\rm T}$, the target detection performance is guaranteed at the expense of a certain communication performance loss.

In Fig.~\ref{Sim_fre}, we further examine the effects of the maximum allowable frequency offset $f_{\rm max}$ on the sum rate performance. It is clear that for the FDA-based schemes, i.e., \textbf{Comm-Centric}, \textbf{Prop-AO}, \textbf{SDR-RIS} and \textbf{Fix-FDA}, the sum rate monotonically grows with $f_{\rm max}$ owing to the increased DoFs in frequency offsets design. For the conventional \textbf{PA}, the achievable sum rate is naturally irrelevant to $f_{\rm max}$ and also noticeably lower than those achieved by the FDA-based schemes, since the PA's spatial resolution is limited to the angular domain.

\vspace{-3mm}
\section{Conclusions}

In this paper, we have investigated an FDA-RIS-aided ISAC system. Specifically, we have maximized sum rate by jointly optimizing the BS transmit beamforming vectors, the covariance matrix of the dedicated radar signal, the RIS phase shift matrix, the FDA frequency offsets and the radar receive equalizer, while guaranteeing the level of SCNR of the echo signal. To tackle this intractable non-convex problem, we have first theoretically proved that the dedicated radar signal is unnecessary. Based on this fact, we turned our attention to a simple SUST scenario to theoretically demonstrate that  the FDA-aided BS can always achieve a higher SCNR than its PA-aided counterpart due to its additional spatial resolution in the distance domain. Moreover, it has been revealed that the SCNR increment exhibits linear growth with the BS transmit power and the number of BS receive antennas. Then, we have reformulated the simplified problem into a tractable parametric subtraction form  using the FP-based technique, and subsequently proposed an efficient SADMM-SCA-based AO algorithm to  find the locally optimal solution. Finally, numerical results illustrated superior communication and sensing performance of our proposed algorithm.

\vspace{-4mm}
\appendix
\section*{ }
\vspace{-8.1mm}
\subsection{Proof of Proposition~\ref{prop-sim}}
\label{app-prop-sim}
We prove proposition \ref{prop-sim} by contradiction. First, we assume that the optimal solution to problem (P1) is ${\mathcal{V}_{\rm P_1}^{\rm opt}}\!\!=\!\!\{{\bf w}_{ k}^{\rm opt},{\bf R}_{0}^{\rm opt},{{\bm \Theta}^{\rm opt}},\Delta f^{\rm opt}_{ n_{\rm t}},\!{\bf u}^{\rm opt}\}$ with ${\bf R}_{0}^{\rm opt}\!\neq\!{\bf 0}$, under which the SINR of the user $k$, the BS transmit power, and the SCNR are ${\gamma}_k^{\rm opt}$, ${P}_{\rm B}^{\rm opt}$ and ${\Gamma}_{\rm T}^{\rm opt}$, respectively. Hereafter, we aim to show that there always exists another solution ${\check{\mathcal{V}}_{\rm P_1}}\!=\!\{\check{\bf w}_{ k}, \check{\bf R}_{0}, {{\bm \Theta}^{\rm opt}},\Delta {f}^{\rm opt}_{ n_{\rm t}},{\bf u}^{\rm opt}\}$ with $\check{\bf R}_{0}\!=\!{\bf 0}$, which also leads to the same (or a higher) SINR for each user, the same BS transmit power and SCNR. Specifically, we first formulate the following SDR problem with the  dedicated radar sensing covariance matrix  ${\bf R}_0$ neglected.
\begin{subequations}\label{P1-1-SDP}
    \begin{align}
        ({\text{P1-SDR}}):~  
        {{\rm find}}~~&\{\check{\bf W}_{ k}\} \\
        {\rm {s.t.}}~~&{\rm Tr}(\tilde{\bf h}_k\tilde{\bf h}_k^{\rm H}\check{\bf W}_k)\!\geq\!\gamma_k^{\rm opt}\big(\sum\nolimits_{k'\neq k}{\rm Tr}(\tilde{\bf h}_k\tilde{\bf h}_k^{\rm H}\check{\bf W}_{k'})\!+\!\sigma_k^2\big),~\forall k\!\in\!\mathcal{K},\label{P1-1-SDP-1} \\
        &\sum\nolimits_{k=1}^{K}{\rm Tr}(\check{\bf W}_k)\!=\!P_{\rm B}^{\rm opt}, \\
        &\sum\nolimits_{k=1}^{K}{\rm Tr}({\bf G}_{\rm T}\check{\bf W}_k)\!=\!{\Gamma}_{\rm T}^{\rm opt}\big(\sum\nolimits_{k=1}^{K}\sum\nolimits_{c=1}{\rm Tr}({\bf G}_{c}\check{\bf W}_k)\!+\!\Vert{\bf u}^{\rm opt}\Vert^2\sigma_{\rm R}^2\big),\label{P1-1-SDP-2}\\
        &\check{\bf W}_k\succeq {\bf 0},~\forall k\!\in\!\mathcal{K},
    \end{align}
\end{subequations}
where ${\bf G}_{\rm T}\!=\!\beta_{\rm T}{\bf H}_{\rm BT}^{\rm H}{\bf u}^{\rm opt}({\bf u}^{\rm opt})^{\rm H}{\bf H}_{\rm BT}$ and ${\bf G}_{c}\!=\!\beta_{c}{\bf H}_{{\rm BC},c}^{\rm H}{\bf u}^{\rm opt}({\bf u}^{\rm opt})^{\rm H}{\bf H}_{{\rm BC},c}$. Note that problem (P1-SDR) is feasible, since we can define a feasible solution as $\check{\bf W}_{k_0}\!=\!{\bf w}_{k_0}^{\rm opt}({\bf w}_{k_0}^{\rm opt})^{\rm H}\!+\!{\bf R}_{0}^{\rm opt}$ and $\check{\bf W}_{k'}\!=\!{\bf w}_{k'}^{\rm opt}({\bf w}_{k'}^{\rm opt})^{\rm H},{k'}\!\neq\!{k_0}$ for any $k_0$, 
which yields a higher SINR for user $k_0$ while holding SINRs of  other users fixed, i.e., $\check{\gamma}_{k_0}\!>\!\gamma_{k_0}^{\rm opt}$ and $\check{\gamma}_k\!=\!\gamma_{k}^{\rm opt},\forall k\!\neq\!k_0$ with $\check{\gamma}_k$ representing the SINR for user $k$.  Motivated by  this fact, a higher  sum rate  
$\check{R}_{\rm sum}$ than that achieved by solutions in $\mathcal{V}_{\rm P_{1}}^{\rm opt}$ can be attained. Furthermore, according to \cite[Theorem 3.2]{rank-SDR}, there always exists an optimal solution to problem (P1-SDR) that satisfies $\sum\limits_{k=1}^{K}({\rm Rank}(\check{\bf W}_k^{\rm opt}))^2\!\leq\!K\!+\!2$. Meanwhile, since $\gamma_k^{\rm opt}\!>\!0$, there must be $\check{\bf W}_k^{\rm opt}\!\neq\!{\bf 0}$ or equivalently ${\rm Rank}(\check{\bf W}_k^{\rm opt})\!\geq\!1$. Therefore, ${\rm Rank}(\check{\bf W}_k^{\rm opt})\!=\!1,\forall k\!\in\!\mathcal{K}$ must exist for problem (P1-SDR) and the rank-1 $\check{\bf W}_{k}^{\rm opt}$ can be found by referring to \cite{rank-SDR}. By applying eigenvalue decomposition to $\check{\bf W}_k^{\rm opt}$, the optimal BS beamforming vectors $\check{\bf w}_k^{\rm opt}$ can be obtained. It follows from \eqref{P1-1-SDP-1}-\eqref{P1-1-SDP-2} that  the optimal solutions in ${\check{\mathcal{V}}_{\rm P_1}}$ are able to lead to the same (or a higher) objective value of the original problem (P1) as those in ${\mathcal{V}_{\rm P_1}^{\rm opt}}$, with constraints \eqref{P1cons1}-\eqref{P1cons4} also guaranteed,  which contradicts the assumption that ${\mathcal{V}_{\rm P_1}^{\rm opt}}$ is the optimal solution to problem  (P1). This completes the proof.

\vspace{-5mm}
\subsection{Proof of Proposition~\ref{prop-optwu}}
\label{app-optwu}

Firstly, for any given $\{{\bf w}_{\rm c},{\Delta f}\}$, problem (P2) w.r.t. ${\bf u}$ is a generalized Rayleigh quotient problem, to which the optimal ${\bf u}^{\rm opt}$ is obtained as ${\bf u}^{\rm opt}\!=\!\beta_{{\bf u}}({\beta_{\rm C}}\vert{\bf b}_{\rm BC}^{\rm H}{\bf w}_{\rm c}\vert^2\vert p_{\rm clu}\vert^2{\bf b}_{\rm RB,r}{\bf b}_{\rm RB,r}^{\rm H}\!+\!\sigma_{\rm R}^2{\bf I}_{N_{\rm r}})^{-1}{\bf b}_{\rm RB,r}$\cite{Rayquo}, where $\beta_{{\bf u}}$ is the normalization factor leading to $\Vert{\bf u}^{\rm opt}\Vert^2\!=\!1$ and ${p}_{\rm clu}\!=\!\beta_{\rm BR}{\bf b}_{\rm BR}^{\rm H}{\bm\Theta}{\bf a}_{\rm clu,0}^{\rm r}({\bf a}_{\rm clu,0}^{\rm t})^{\rm H}{{\bm\Theta}}{\bf b}_{\rm BR}$.

By substituting ${\bf u}$ into $\Gamma_{\rm SUST}$, we have
\begin{align}\label{hat-SUST}
    \hat{\Gamma}_{\rm SUST}\!&=\!{\beta}_{\rm T}\vert{\bf b}_{\rm BT}^{\rm H}{\bf w}_{\rm c}\vert^2\vert p_{\rm tar}\vert^2{\bf b}_{\rm RB,r}^{\rm H}({\beta_{\rm C}}\vert{\bf b}_{\rm BC}^{\rm H}{\bf w}_{\rm c}\vert^2\vert p_{\rm clu}\vert^2{\bf b}_{\rm RB,r}{\bf b}_{\rm RB,r}^{\rm H}\!+\!\sigma_{\rm R}^2{\bf I}_{N_{\rm t}})^{-1}{\bf b}_{\rm RB,r}\notag\\
    &\overset{(a)}{=}\!\frac{{\beta}_{\rm T}N_{\rm r}\vert{p}_{\rm tar}\vert^2\vert{\bf b}_{\rm BT}^{\rm H}{\bf w}_{\rm c}\vert^2}{{\beta_{\rm C}N_{\rm r}\vert p_{\rm tar}\vert^2\vert{\bf b}_{\rm BC}^{\rm H}{\bf w}_{\rm c}\vert^2}\!+\!\sigma_{\rm R}^2},
\end{align}
where ${p}_{\rm tar}\!=\!\beta_{\rm BR}{\bf b}_{\rm BR}^{\rm H}{\bm\Theta}{\bf a}_{\rm tar,0}^{\rm r}({\bf a}_{\rm tar,0}^{\rm t})^{\rm H}{{\bm\Theta}}{\bf b}_{\rm BR}$, $(a)$ holds due to the equality $({\bf A}\!+\!{\bf x}{\bf y}^{\rm H})^{-1}\!=\!{\bf A}^{-1}\!-\!\frac{{\bf A}^{-1}{{\bf x}{\bf y}^{\rm H}}{\bf A}^{-1}}{(1+{{\bf y}^{\rm H}{\bf A}^{-1}{\bf x}})}$ for an invertible matrix ${\bf A}$. The equalities  $\Vert{\bf b}_{\rm RB,r}\Vert^2\!=\!N_{\rm r}$ and $\vert{p}_{\rm clu}\vert^2\!=\!\vert{p}_{\rm tar}\vert^2$ are also applied to $(a)$. Based on the above discussion, problem (P2) reduces to the following problem w.r.t. $\{{\bf w}_{\rm c},\Delta f\}$
\begin{subequations}
\begin{align}
    ({\text{P2-1}}):~  \underset{{\bf w}_{\rm c},\Delta f}
    {{\max}}~&\tilde{\Gamma}_{\rm SUST}\!\triangleq\!\frac{{\beta}_{\rm T}N_{\rm r}\vert p_{\rm tar}\vert^2\vert{\bf b}_{\rm BT}^{\rm H}{\bf w}_{\rm c}\vert^2}{{\beta_{\rm C}N_{\rm r}\vert p_{\rm tar}\vert^2\vert{\bf b}_{\rm BC}^{\rm H}{\bf w}_{\rm c}\vert^2}\!+\!\frac{\sigma_{\rm R}^2}{P_{\rm B}}\Vert{\bf w}_{\rm c}\Vert^2},\\
    {\rm {s.t.}}~&
    \Vert {\bf w}_{\rm c}\Vert^2\!\leq\!P_{\rm B},~\Delta f\in(0,\Delta f_{\max}].
\end{align}
\end{subequations}
    It is readily verified that the BS transmit power constraint must be active at the optimum, i.e., $\Vert {\bf w}_{\rm c}^{\rm opt}\Vert^2\!=\!P_{\rm B}$, and  thus problem  (P2-1) with any given $\Delta f$ can also be regarded as a generalized Rayleigh quotient problem w.r.t. ${\bf w}_{\rm c}$, to which the optimal ${\bf w}_{\rm c}^{\rm opt}$ is obtained as ${\bf w}_{\rm c}^{\rm opt}\!=\!\beta_{{\bf w}_{\rm c}}({\beta}_{\rm C}N_{\rm r}\vert p_{\rm tar}\vert^2{\bf b}_{\rm BC}{\bf b}_{\rm BC}^{\rm H}\!+\!\frac{\sigma_{\rm R}^2}{P_{\rm B}}{\bf I}_{N_{\rm t}})^{-1}{\bf b}_{\rm BT}$, where $\beta_{{\bf w}_{\rm c}}$ is the normalization factor satisfying $\Vert{\bf w}_{\rm c}^{\rm opt}\Vert^2\!=\!P_{\rm B}$. Then, substituting ${\bf w}_{\rm c}^{\rm opt}$ into $\tilde{\Gamma}_{\rm SUST}$ yields
\begin{align}\label{prop_SUST_FDA}
    {\Gamma}_{\rm SUST}^{\rm FDA}\!&=\!{\beta}_{\rm T}N_{\rm r}\vert p_{\rm tar}\vert^2 {\bf b}_{\rm BT}^{\rm H}({\beta_{\rm C}N_{\rm r}\vert p_{\rm tar}\vert^2{\bf b}_{\rm BC}{\bf b}_{\rm BC}^{\rm H}}\!+\!{\sigma_{\rm R}^2}/{P_{\rm B}}{\bf I}_{N_{\rm t}})^{-1}{\bf b}_{\rm BT}\notag\\
    &\overset{(b)}{=}\!{{\beta}_{\rm T}N_{\rm r}\vert p_{\rm tar}\vert^2P_{\rm B}}\big(\frac{N_{\rm t}}{\sigma_{\rm R}^2}\!-\!\frac{P_{\rm B}\beta_{\rm C}N_{\rm r}\vert p_{\rm tar}\vert^2\big\vert\frac{\sin({N_{\rm t}}{2\pi\Delta f\Delta D}/{{\rm c}})}{\sin({2\pi\Delta f\Delta D}/{{\rm c}})}\big\vert^2}{{\sigma_{\rm R}^2}(\sigma_{\rm R}^2\!+\!P_{\rm B}N_{\rm t}\beta_{\rm C}N_{\rm r}\vert p_{\rm tar}\vert^2)}\big),
\end{align}
where $(b)$ holds similarly to $(a)$ in \eqref{hat-SUST}. Also, the equalities $\Vert{\bf b}_{\rm BT}\Vert^2\!=\!\Vert{\bf b}_{\rm BC}\Vert^2\!=\!N_{\rm t}$ and $\vert{\bf b}_{\rm BT}^{\rm H}{\bf b}_{\rm BC}\vert^2\!=\!\big\vert\frac{\sin({N_{\rm t}}{2\pi\Delta f\Delta D}/{{\rm c}})}{\sin({2\pi\Delta f\Delta D}/{{\rm c}})}\big\vert^2$ with $\Delta D\!\triangleq\!\vert D_{\rm RC}\!-\!D_{\rm RT}\vert$ are applied to $(b)$.

Based on \eqref{prop_SUST_FDA}, problem (P2) finally reduces to the following optimization problem w.r.t. $\Delta f$ 
    \begin{align}
        ({\text{P2-2}}):~  \underset{\Delta f}
        {{\max}}~{\Gamma}_{\rm SUST}^{\rm FDA},~~
        {\rm {s.t.}}~\Delta f\!\in\!(0,\Delta f_{\max}].
    \end{align}
By leveraging the fact that the function $f_{\rm SUST}(\Delta f)\!\triangleq\!\Big\vert\frac{\sin({N_{\rm t}}{2\pi\Delta f\Delta D}/{{\rm c}})}{\sin({2\pi\Delta f\Delta D}/{{\rm c}})}\Big\vert^2$ monotonically decreases with $\Delta f$ when $0\!<\!\Delta f\!\leq\!\Delta f_0$, where $\Delta f_0\!=\!\frac{\rm c}{2N_{\rm t}\Delta D}$ is the smallest positive zero of $f_{\rm SUST}(\Delta f)$, we readily derive the optimal solution to problem (P2-2) as
$\Delta f^{\rm opt}\!=\!\min\{\Delta f_{\max},\Delta f_0\}$. Furthermore, substituting $\Delta f^{\rm opt}$ into $\Gamma_{\rm SUST}^{\rm FDA}$ yields $\Gamma_{\rm SUST,max}^{\rm FDA}$.

In particular, for its PA-RIS-aided counterpart, by substituting $\Delta f\!\!=\!0$ into ${\Gamma}_{\rm SUST}^{\rm FDA}$, we have 
    ${\Gamma}_{\rm SUST}^{\rm PA}\!=\!\frac{{\beta}_{\rm T}P_{\rm B}N_{\rm r}\vert p_{\rm tar}\vert^2}{\sigma_{\rm R}^2}\big(N_{\rm t}\!-\!\frac{P_{\rm B}\beta_{\rm C}N_{\rm t}^2N_{\rm r}\vert p_{\rm tar}\vert^2}{\sigma_{\rm R}^2\!+\!P_{\rm B}N_{\rm t}\beta_{\rm C}N_{\rm r}\vert p_{\rm tar}\vert^2}\big)$.
It can be verified that $\big\vert\frac{\sin({N_{\rm t}}{2\pi\Delta f\Delta D}/{{\rm c}})}{\sin({2\pi\Delta f\Delta D}/{{\rm c}})}\big\vert^2\!\leq\!N_{\rm t}^2$ always holds for any $\Delta f\!>\!0$, thus leading to ${\Gamma}_{\rm SUST}^{\rm PA}\!\leq\!{\Gamma}_{\rm SUST}^{\rm FDA}$, and the maximum achievable SCNR increment is given by
    $\Delta\Gamma_{\max}\!=\!{\Gamma}_{\rm SUST,max}^{\rm FDA}\!-\!{\Gamma}_{\rm SUST}^{\rm PA}\!=\!\frac{{\beta}_{\rm T}{\beta}_{\rm C}N_{\rm r}^2\vert p_{\rm tar}\vert^4P_{\rm B}^2(N_{\rm t}^2-\big\vert\frac{\sin({N_{\rm t}}{2\pi\Delta f^{\rm opt}\Delta D}/{{\rm c}})}{\sin({2\pi\Delta f^{\rm opt}\Delta D}/{{\rm c}})}\big\vert^2)}{\sigma_{\rm R}^2(\sigma_{\rm R}^2+P_{\rm B}N_{\rm t}\beta_{\rm C}N_{\rm r}\vert p_{\rm tar}\vert^2)}$.
This completes the proof.

\vspace{-4mm}
\subsection{Proof of Proposition~\ref{prop1}}
\label{appA}
Firstly, since the auxiliary variable $\alpha_k$ in problem (P3) is unconstrained, its optimal solution for any given $\{{\bf w}_{k},{\bm\Theta},w_{k},{\bf u},f_{n_{\rm t}}\}$ can be directly obtained via the first-order optimality condition as $\alpha_k^{\rm opt}\!=\!\frac{\sqrt{1+w_k}\tilde{\bf h}_k^{\rm H}{\bf w}_k}{\sum\nolimits_{k'=1}^{K}\vert\tilde{\bf h}_{k}^{\rm H}{\bf w}_{k'}\vert^2\!+\!\sigma_k^2}$. By substituting $\{\alpha_{ k}^{\rm opt}\}_{k\in\mathcal{K}}$ into \eqref{P2obj}, we have $f_{\rm FP}(\{w_k\})\!\triangleq\!\sum\nolimits_{k=1}^{K}\big(\log(1\!+\!w_k)\!-\!w_k\!+\!(1\!+\!w_k)\frac{\vert\tilde{\bf h}_k^{\rm H}{\bf w}_k\vert^2}{\sum\nolimits_{k'=1}^{K}\vert\tilde{\bf h}_{k}^{\rm H}{\bf w}_{k'}\vert^2\!+\!\sigma_k^2}\big)$. Moreover, as the variable $w_k$ in problem (P3) is also unconstrained, its optimum can be similarly obtained as $w_k^{\rm opt}\!=\!\frac{\vert\tilde{\bf h}_k^{\rm H}{\bf w}_k\vert^2}{\sum\nolimits_{k'\neq k}\vert\tilde{\bf h}_{k}^{\rm H}{\bf w}_{k'}\vert^2\!+\!\sigma_k^2}$. Substituting $\{w_k^{\rm opt}\}_{k\in\mathcal{K}}$ into $f_{\rm FP}(\{w_{k}\})$ further yields $f_{\rm FP}(\{w_k^{\rm opt}\})\!=\!\sum\nolimits_{k=1}^{K} R_k$. Therefore, we can conclude that  problems (P1-1)  and  (P3) are equivalent via  introducing  the  auxiliary variables $\{\alpha_{k},w_{k}\}$. This completes the proof.

\vspace{-4mm}
\subsection{Proof of Proposition~\ref{prop-new-new-2}}
\label{app-prop-new-new-2}
We prove this proposition by contradiction. Specifically, we assume the optimal solutions to problem (P3) to be $\bar{\mathcal{V}}_{\rm P_3}\!=\!\{\bar{\alpha}_{ k},\bar{w}_{ k},{\bar{\bf w}}_{ k},\bar{\bm \Theta},\Delta{ \bar{f}}_{ n_{\rm t}},\bar{\bf u}\}$, which leads to $\sum\limits_{k=1}^{K}\Vert\bar{\bf w}_k\Vert^2\!<\!P_{\rm B}$. Then we can always find another set of solutions $\hat{\mathcal{V}}_{\rm P_3}\!=\!\{\hat{\alpha}_{ k},\bar{w}_{ k},{\hat{\bf w}}_{ k},\bar{\bm \Theta},\bar\Delta{ f}_{ n_{\rm t}},\bar{\bf u}\}$ with $\hat{\alpha}_k\!=\!\frac{\bar{\alpha}}{\sqrt{\nu_1} }$, $\hat{\bf w}_k\!=\!{\sqrt{\nu_1} }\bar{\bf w}_k$ and $\nu_1\!=\!\frac{P_{\rm B}}{\sum\limits_{k=1}^{K}\Vert\bar{\bf w}_k\Vert^2}\!>\!1$, which leads to 
\begin{align}\label{contra1}
    \hat{f}_{\rm obj,P_3}\!&=\!\sum\limits_{k=1}^{K}\big( \log(1\!+\!\bar{w}_k)\!-\!\bar{w}_k\!+\!2\sqrt{1\!+\!\bar{w}_k}\Re\{\bar{\alpha}_k^*\tilde{\bf h}_k^{\rm H}\bar{\bf w}_k\}\!-\!\vert\bar{\alpha}_k\vert^2\big(\sum\limits_{k'=1}^{K}\vert\tilde{\bf h}^{\rm H}_k\bar{\bf w}_{k'}\vert^2\!+\!\frac{\sigma_k^2}{\nu_1}\big) \big)\notag\\
    &>\!\sum\limits_{k=1}^{K}\big( \log(1\!+\!\bar{w}_k)\!-\!\bar{w}_k\!+\!2\sqrt{1\!+\!\bar{w}_k}\Re\{\bar{\alpha}_k^*\tilde{\bf h}_k^{\rm H}\bar{\bf w}_k\}\!-\!\vert\bar{\alpha}_k\vert^2\big(\sum\limits_{k'=1}^{K}\vert\tilde{\bf h}^{\rm H}_k\bar{\bf w}_{k'}\vert^2\!+\!{\sigma_k^2}\big) \big)\!=\!\bar{f}_{\rm obj,P_3},\notag
\end{align}
\begin{align}
        \hat{\Gamma}_{\rm T}\!&=\!\frac{\nu_1\beta_{\rm T}\sum\limits_{k=1}^{K}\vert\bar{\bf u}^{\rm H}{\bf H}_{\rm BT}\bar{\bf w}_k\vert^2\!}{\nu_1\sum\limits_{c=1}^{C}\!\sum\limits_{k=1}^{K}\!\beta_{c}\vert\bar{\bf u}^{\rm H}{\bf H}_{{\rm BC},c}\bar{\bf w}_k\vert^2\!+\!\Vert\bar{\bf u}\Vert^2\sigma_{\rm R}^2}\!>\!\frac{{\beta}_{\rm T}\sum\limits_{k=1}^{K}\vert\bar{\bf u}^{\rm H}{\bf H}_{\rm BT}\bar{\bf w}_k\vert^2\!}{\sum\limits_{c=1}^{C}\sum\limits_{k=1}^{K}{\beta}_{c}\vert\bar{\bf u}^{\rm H}{\bf H}_{{\rm BC},c}\bar{\bf w}_k\vert^2\!+\!\Vert\bar{\bf u}\Vert^2\sigma_{\rm R}^2}\!=\!\bar{\Gamma}_{\rm T},
\end{align}
where $\{\hat{f}_{\rm obj,P_3},\hat{\Gamma}_{\rm T}\}$ $(\{\bar{f}_{\rm obj,P_3},\bar{\Gamma}_{\rm T}\})$ denotes the set of the objective value and SCNR value of problem  (P3) under the solutions in $\hat{\mathcal{V}}_{\rm P_3}$ ($\bar{\mathcal{V}}_{\rm P_3}$), respectively.  It can be readily inferred from \eqref{contra1} that the solutions in $\hat{\mathcal{V}}_{\rm P_3}$ are feasible and yield a higher objective value than those in $\bar{\mathcal{V}}_{\rm P_3}$, which contradicts the assumption that $\bar{\mathcal{V}}_{\rm P_3}$ is the optimal solution to problem (P3). This completes the proof.

\clearpage

\includepdf[pages=1-]{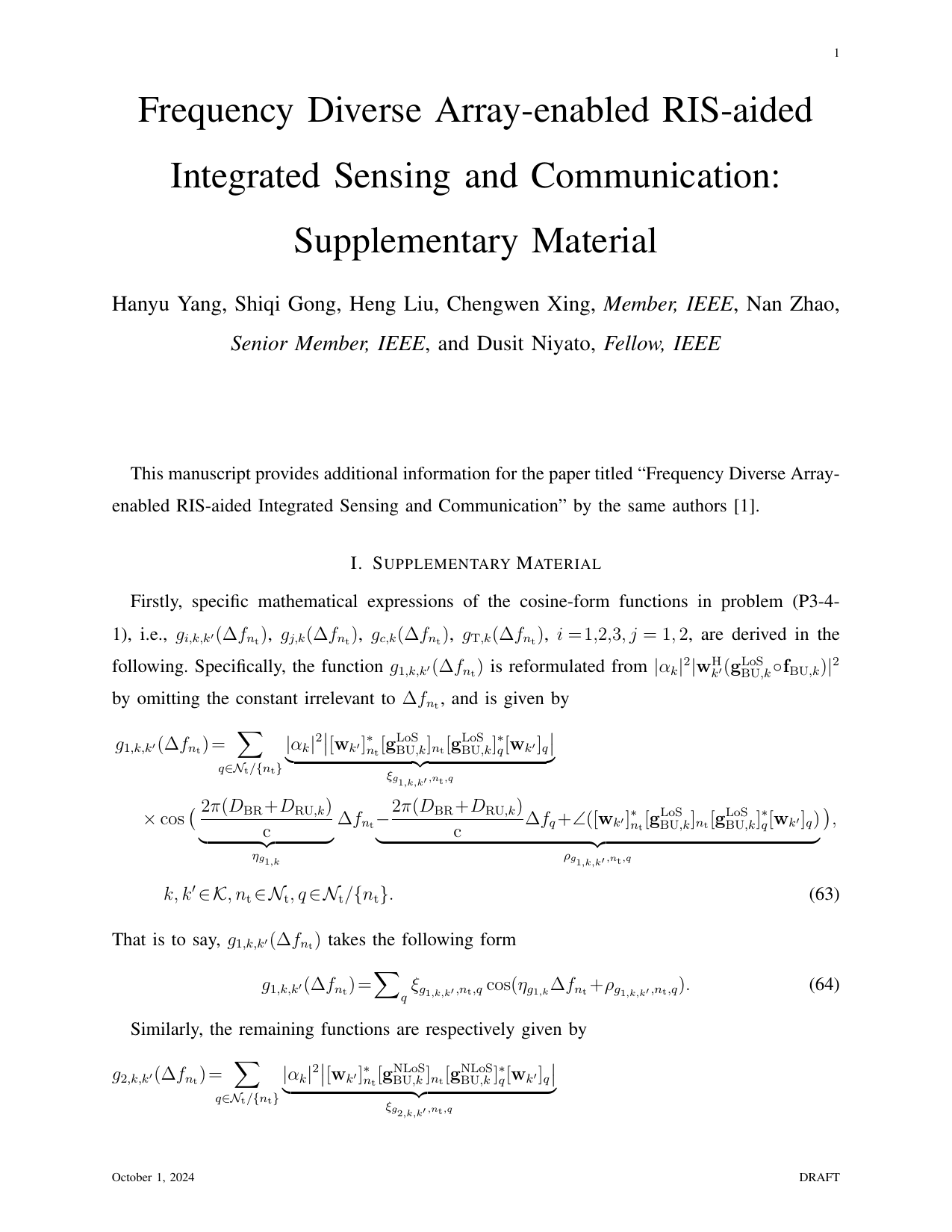}

\clearpage

\end{document}